\DeclareMathAlphabet{\mathantt}{OT1}{antt}{li}{it}
\DeclareMathAlphabet{\mathpzc}{OT1}{pzc}{m}{it}
\def\moverlay{\mathpalette\mov@rlay}
\def\mov@rlay#1#2{\leavevmode\vtop{%
   \baselineskip\z@skip \lineskiplimit-\maxdimen
   \ialign{\hfil$\m@th#1##$\hfil\cr#2\crcr}}}
\newcommand{\charfusion}[3][\mathord]{
    #1{\ifx#1\mathop\vphantom{#2}\fi
        \mathpalette\mov@rlay{#2\cr#3}
      }
    \ifx#1\mathop\expandafter\displaylimits\fi}
\newcommand{\cupdot}{\charfusion[\mathbin]{\cup}{\cdot}}
\newcommand{\AP}{\textit{AP}}
\newcommand{\tr}{\textit{tr}}
\newcommand{\TR}{\textit{TR}}
\newcommand{\CQuan}{\#}
\newcommand{\hyperltl}{{\sc HyperLTL}\@\xspace}
\newcommand{\nats}{\mathbb N}
\newcommand{\pspace}{\textsc{Pspace}}
\begin{document}

\title{Model Checking Quantitative Hyperproperties\thanks{This work was partly supported by the ERC Grant 683300 (OSARES) and by the German Research Foundation (DFG) in the Collaborative Research Center 1223.}}
\author{Bernd Finkbeiner \and Christopher Hahn \and\\ Hazem Torfah}
\institute{Reactive Systems Group\\Saarland University\\\email{lastname@react.uni-saarland.de}}
\date{}
\maketitle

\begin{abstract}
  Hyperproperties are properties of sets of computation
  traces. In this paper, we study quantitative hyperproperties, which we
  define as hyperproperties 
  that express a bound on the number of traces that may appear
  in a certain relation. For example, quantitative non-interference
  limits the amount of information about certain secret inputs
  that is leaked through the observable outputs of a system.
  Quantitative non-interference thus bounds the
  number of traces that have the same observable input but different
  observable output.  
  We study quantitative hyperproperties in the
  setting of HyperLTL, a temporal logic for hyperproperties. We show
  that, while quantitative hyperproperties can be expressed in
  HyperLTL, the running time of the HyperLTL model checking algorithm
  is, depending on the type of property, exponential or even doubly exponential in the quantitative bound.  We improve this
  complexity with a new model checking algorithm based on model-counting. The new algorithm  needs only logarithmic space in the bound and therefore improves, depending on the property, exponentially or even doubly exponentially over the model checking algorithm of HyperLTL. In the worst case, the new algorithm needs polynomial space in the size of the system. Our Max\#Sat-based prototype implementation demonstrates, however, that the counting approach is viable on systems with nontrivial quantitative information flow requirements such as a passcode checker. 
\end{abstract}

\section{Introduction}

Model checking algorithms~\cite{Clarke/Design-and-Synthesis} are the cornerstone of computer-aided verification. As their input consists of both the system under verification and a logical formula that describes the property to be verified, they uniformly solve a wide range of verification problems, such as all verification problems expressible in linear-time temporal logic (LTL), computation-tree logic (CTL), or the modal $\mu$-calculus. Recently, there has been a lot of interest in extending  model checking from standard trace and tree properties to \emph{information flow} policies like observational determinism or quantitative information flow. Such policies are called \emph{hyperproperties}~\cite{journals/jcs/ClarksonS10} and can be expressed in HyperLTL~\cite{conf/post/ClarksonFKMRS14}, an extension of LTL with trace quantifiers and trace variables. For example, \emph{observational determinism}~\cite{conf/csfw/ZdancewicM03}, the requirement that any pair of traces that have the same observable input also have the same observable output, can be expressed as the following HyperLTL formula:
$ \forall \pi.\forall \pi'.\,(\LTLsquare \pi =_I {\pi'}) \rightarrow (\LTLsquare \pi =_O {\pi'}) $.
For many information flow policies of interest, including observational determinism, there is no longer a need for property-specific algorithms: it has been shown that the standard HyperLTL model checking algorithm~\cite{conf/cav/FinkbeinerRS15} performs just as well as a specialized algorithm for the respective property.

The class of hyperproperties studied in this paper is one where, by contrast, the standard model checking algorithm performes badly. We are interested in \emph{quantitative hyperproperties}, i.e., hyperproperties  that express a bound on the number of traces that may appear
  in a certain relation.
A prominent example of this class of properties is \emph{quantitative non-interference}~\cite{Smith/2009/OnTheFoundationsOfQantitativeInformationFlow,Yasuoka+Terauchi/2010/OnBoundingProblemsOfQuantitativeInformationFlow}, where we allow some flow of information but, at the same time, limit the amount of information that may be leaked. Such properties are used, for example, to describe the correct behavior of a password check, where some information flow is unavoidable (``the password was incorrect''), and perhaps some extra information flow is accceptable (``the password must contain a special character''), but the information should not suffice to guess the actual password.
In HyperLTL, quantitative non-interference can be expressed~\cite{conf/post/ClarksonFKMRS14} as the formula
	$
        \forall \pi_0.\ \forall \pi_1 \ldots \forall \pi_{2^c}.~  \left(\bigwedge_i \LTLsquare ({\pi_i} =_I {\pi_0}) \right) \rightarrow \left(\bigvee_{i \neq j} \LTLsquare ({\pi_i} =_O {\pi_j})\right).
	$
The formula states that there do not exist $2^c+1$ traces (corresponding to more than $c$ bits of information) with the same observable input but different observable output. The bad performance of the
standard model checking algorithm is a consequence of the fact that the $2^{c}+1$ traces are tracked simultaneously. For this purpose, the model checking algorithm builds and analyzes a $(2^c+1)$-fold self-composition of the system.

We present a new model checking algorithm for quantitative hyperproperties that avoids the construction of the huge self-composition. The key idea of our approach is to use \emph{counting} rather than \emph{checking} as the basic operation. Instead of building the self-composition and then \emph{checking} the satisfaction of the formula, we add new atomic propositions and then \emph{count} the number of sequences of evaluations of the new atomic propositions that satisfy the specification. Quantitative hyperproperties are expressions of the following form:
\[ \forall \pi_1. \dots \forall \pi_k.\, \varphi \rightarrow (\# \sigma: X.\, \psi \triangleleft n),\]
where $\triangleleft \in \{\leq,<,\geq,>,=\}$. The universal quantifiers introduce a set of reference traces against which other traces can be compared. The formulas $\varphi$ and $\psi$ are HyperLTL formulas. The counting quantifer $\# \sigma:X.\, \psi$ counts the number of paths $\sigma$ with different valuations of the atomic propositions $X$ that satisfy $\psi$.
The requirement that no more than $c$ bits of information are leaked is the following quantitative hyperproperty:
\[
\forall \pi.\, \CQuan \sigma\colon O. \,  \LTLsquare({\pi} =_I \sigma)  \leq 2^c
\]
As we show in the paper, such expressions do not change the expressiveness of the logic; however, they allow us to express quantitative hyperproperties in exponentially more concise form. The counting-based model checking algorithm then maintains this advantage with a logarithmic counter, resulting in exponentially better performance in both time and space.

The viability of our counting-based model checking algorithm is demonstrated on a Max\#SAT-based prototype implementation. For quantitative hyperproperties of intrest, such as bounded leakage of a password checker, our algorithm shows promising results, as it significantly outperforms existing model checking approaches.

\subsection{Related Work}
Quantitative information-flow has been studied extensively in the literature. See, for example, the following selection of contributions on this topic:~\cite{Smith/2009/OnTheFoundationsOfQantitativeInformationFlow,DBLP:conf/ccs/KopfB07,DBLP:conf/csfw/ClarksonMS05,DBLP:journals/logcom/ClarkHM05,DBLP:journals/jcs/AlvimAP12,DBLP:conf/sp/Gray91}.
Multiple verification methods for quantitative information-flow were proposed for sequential systems. For example, with static analysis techniques~\cite{DBLP:journals/jcs/ClarkHM07}, approximation methods~\cite{DBLP:conf/csfw/KopfR10}, equivalence relations~\cite{cohen1978information,DBLP:conf/sp/BackesKR09}, and randomized methods~\cite{DBLP:conf/csfw/KopfR10}.
Quantitative information-flow for multi-threaded programs was considered in~\cite{DBLP:conf/pldi/ChenM07}.

The study of quantitative information-flow in a reactive setting gained a lot of attention recently after the introduction of hyperproperties~\cite{journals/jcs/ClarksonS10} and the idea of verifying the self-composition of a reactive system~\cite{DBLP:journals/mscs/BartheDR11} in order to relate traces to each other. There are several possibilities to measure the amount of leakage, such as Shannon entropy~\cite{DBLP:books/aw/Denning82,DBLP:journals/jcs/ClarkHM07,DBLP:conf/popl/Malacaria07}, guessing entropy~\cite{DBLP:conf/ccs/KopfB07,DBLP:conf/sp/BackesKR09}, and min-entropy~\cite{Smith/2009/OnTheFoundationsOfQantitativeInformationFlow}.
A classification of quantitative information-flow policies as safety and liveness hyperproperties was given in~\cite{DBLP:journals/tcs/YasuokaT14}.
While several verification techniques for hyperproperties exists~\cite{DBLP:journals/jfp/BanerjeeN05,DBLP:journals/ijisec/HammerS09,DBLP:conf/nordsec/MilushevC13,DBLP:conf/popl/Myers99}, the literature was missing general approaches to quantitative information-flow control. SecLTL~\cite{DBLP:conf/vmcai/DimitrovaFKRS12} was introduced as first general approach to model check (quantitative) hyperproperties, before HyperLTL~\cite{conf/post/ClarksonFKMRS14}, and its corresponding model checker~\cite{conf/cav/FinkbeinerRS15}, was introduced as a temporal logic for hyperproperties, which subsumes the previous approaches.

Using counting to compute the number of solutions of a given formula is studied in the literature as well and includes many probabilistic inference problems, such as Bayesian net reasoning~\cite{DBLP:journals/jar/LittmanMP01}, and planning problems, such as computing robustness of plans in incomplete domains~\cite{DBLP:conf/aaai/MorwoodB12}. State-of-the-art tools for propositional model counting are \texttt{Relsat}~\cite{DBLP:conf/aaai/BayardoS97} and \texttt{c2d}~\cite{DBLP:conf/ecai/Darwiche04}. Algorithms for counting models of temporal logics and automata over infinite words have been introduced in \cite{DBLP:conf/lata/FinkbeinerT14,FTATVA17,Torfah2016}. 
The counting of projected models, i.e., when some parts of the models are irrelevant, was studied in~\cite{DBLP:conf/sat/AzizCMS15}, for which tools such as \texttt{\#CLASP}~\cite{DBLP:conf/sat/AzizCMS15} and~\texttt{DSharp\_P}~\cite{DBLP:conf/sat/AzizCMS15,DBLP:conf/ai/MuiseMBH12} exist. Our SAT-based prototype implementation is based on a reduction to a Max\#SAT~\cite{DBLP:conf/aaai/FremontRS17} instance, for which a corresponding tool exists.

Among the already existing tools for computing the amount of information leakage, for example, \texttt{QUAIL}~\cite{DBLP:conf/cav/BiondiLTW13}, which analyzes programs written in a specific while-language and \texttt{LeakWatch}~\cite{DBLP:conf/esorics/ChothiaKN14}, which estimates the amount of leakage in Java programs, \texttt{Moped-QLeak}~\cite{DBLP:conf/fsttcs/ChadhaMS14} is closest to our approach. However, their approach of computing a symbolic summary as an Algebraic Decision Diagram is, in contrast to our approach, solely based on model counting, not maximum model counting.

\section{Preliminaries}
\subsection{HyperLTL}

HyperLTL~\cite{conf/post/ClarksonFKMRS14} extends linear-time temporal logic (LTL) with trace variables and trace quantifiers.
\label{tracevshyper}
Let $\mathit{AP}$ be a set of \emph{atomic propositions}.
A \emph{trace} $t$ is an infinite sequence over subsets of the atomic propositions. We define the set of traces $\mathit{TR} \coloneqq (2^\mathit{AP})^\omega$.
A subset $T \subseteq \mathit{TR}$ is called a \emph{trace property} and a subset $H \subseteq 2^\mathit{TR}$ is called a \emph{hyperproperty}.
We use the following notation to manipulate traces:
let $t \in \mathit{TR}$ be a trace and $i \in \mathbb{N}$ be a natural number. $t[i]$ denotes the $i$-th element of $t$. Therefore, $t[0]$ represents the starting element of the trace. Let $j \in \mathbb{N}$ and $j \geq i$. $t[i,j]$ denotes the sequence $t[i]~t[i+1]\ldots t[j-1]~t[j]$. $t[i, \infty]$ denotes the infinite suffix of $t$ starting at position $i$.

\paragraph{HyperLTL Syntax.}
Let $\mathcal{V}$ be an infinite supply of trace variables.  The syntax of HyperLTL is given by the following grammar:
\begin{align*}
	\psi~&\Coloneqq~\exists \pi.\;\psi~~|~~\forall\pi.\;\psi~~|~~\varphi \\
	\varphi~&\Coloneqq~a_{\pi}~~|~~\neg \varphi~~|~~\varphi \vee \varphi~~|~~\LTLnext \varphi~~|~~\varphi\, \LTLuntil \varphi 
\end{align*}
where $a\in \mathit{AP}$ is an atomic proposition and $\pi \in \mathcal V$ is a trace variable. Note that atomic propositions are indexed by trace variables.
The quantification over traces makes it possible to express properties like ``on all traces $\psi$ must hold'', which is expressed by $\forall \pi.~\psi$. Dually, one can express that ``there exists a trace such that $\psi$ holds'', which is denoted by $\exists \pi.~\psi$. The derived operators $\LTLdiamond$, $\LTLsquare$, and $\mathcal W$ are defined as for LTL.
We abbreviate the formula $\bigwedge_{x\in X} (x_\pi \leftrightarrow x_{\pi'})$, expressing that the traces $\pi$ and $\pi'$ are equal with respect to a set $X \subseteq \mathit{AP}$ of atomic propositions, by $\pi =_X \pi'$.
Furthermore, we call a trace variable $\pi$ free in a HyperLTL formula if there is no quantification over $\pi$ and we call a HyperLTL formula $\varphi$ closed if there exists no free trace variable in $\varphi$.

\paragraph{HyperLTL Semantics.}
A HyperLTL formula defines a \emph{hyperproperty}, i.e., a set of sets of traces. A set $T$ of traces satisfies the hyperproperty if it is an element of this set of sets. 
Formally, the semantics of HyperLTL formulas is given with respect to a \emph{trace assignment} $\Pi$ from $\mathcal{V}$ to $\mathit{TR}$, i.e., a partial function mapping trace variables to actual traces. $\Pi[\pi \mapsto t]$ denotes that $\pi$ is mapped to $t$, with everything else mapped according to $\Pi$. $\Pi[i,\infty]$ denotes the trace assignment that is equal to $\Pi(\pi)[i,\infty]$ for all $\pi$.
\begin{align*}
	&\Pi \models_T~\exists \pi. \psi &&\text{iff}\hspace{5ex} \text{there exists}~t \in T~:~ \Pi[\pi \mapsto t] \models_T \psi \\
	&\Pi \models_T~\forall \pi. \psi &&\text{iff}\hspace{5ex} \text{for all}~t \in T~:~ \Pi[\pi \mapsto t] \models_T \psi \\
	&\Pi \models_T~a_{\pi} &&\text{iff}\hspace{5ex} a \in \Pi(\pi)[0] \\
	&\Pi \models_T~\neg \psi &&\text{iff}\hspace{5ex} \Pi \not \models_T \psi \\
	&\Pi \models_T~\psi_1 \vee \psi_2 &&\text{iff}\hspace{5ex} \Pi \models_T \psi_1~\text{or}~\Pi \models_T \psi_2 \\
	&\Pi \models_T~\LTLnext \psi &&\text{iff}\hspace{5ex} \Pi[1,\infty] \models_T \psi \\
	&\Pi \models_T~\psi_1 \LTLuntil \psi_2 &&\text{iff}\hspace{5ex} \text{there exists}~i \geq 0 : \Pi[i,\infty] \models_T \psi_2 \\
	& &&\hspace{7.2ex} \text{and for all}~0 \leq j < i~\text{we have}~\Pi[j,\infty] \models_T \psi_1
\end{align*}
We say a set of traces $T$ \emph{satisfies} a HyperLTL formula $\varphi$ if $\Pi \models_T \varphi$, where $\Pi$ is the empty trace assignment.

\subsection{System model}
A \emph{Kripke structure} is a tuple $K=(S,s_0,\delta,\mathit{AP},L)$ consisting of a set of states $S$, an initial state $s_0 \in S$, a transition function $\delta: S \rightarrow 2^S$, a set of \emph{atomic propositions} $\mathit{AP}$, and a \emph{labeling function} $L : S \rightarrow 2^\mathit{AP}$, which labels every state with a set of atomic propositions. We assume that each state has a successor, i.e., $\delta(s) \not = \emptyset$. This ensures that every run on a Kripke structure can always be extended to an infinite run. We define a \emph{path} of a Kripke structure as an infinite sequence of states $s_0s_1\dots \in S^\omega$ such that $s_0$ is the initial state of $K$ and $s_{i+1} \in \delta(s_i)$ for every $i \in \mathbb{N}$. We denote the set of all paths of $K$ that start in a state $s$ with $\mathit{Paths}(K,s)$. Furthermore, $\mathit{Paths}^*(K,s)$ denotes the set of all path prefixes and $\mathit{Paths}^\omega(K,s)$ the set of all path suffixes. A \emph{trace} of a Kripke structure is an infinite sequence of sets of atomic propositions $L(s_0),L(s_1),\dots \in (2^\mathit{AP})^\omega$, such that $s_0$ is the initial state of $K$ and $s_{i+1} \in \delta(s_i)$ for every $i \in \mathbb{N}$.
We denote the set of all traces of $K$ that start in a state $s$ with $\mathit{TR}(K,s)$. 
We say that a Kripke structure $K$ \emph{satisfies} a HyperLTL formula $\varphi$ if its set of traces satisfies $\varphi$, i.e., if  $\Pi \models_{\mathit{TR}(K,s_0)} \varphi$, where $\Pi$ is the empty trace assignment.

\subsection{Automata over infinite words}
In our construction we use automata over infinite words. A \emph{B\"uchi automaton} is a tuple $\mathcal B = (Q,Q_0,\delta,\Sigma,F)$, where $Q$ is a set of states, $Q_0$ is a set of initial states, $\delta: Q \times \Sigma \rightarrow 2^Q$ is a transition relation, and $F\subset Q$ are the accepting states. 
A run of $\mathcal B$ on an infinite word $w = \alpha_1 \alpha_2 \dots \in \Sigma^\omega$ is an infinite sequence $r = q_0 q_1 \dots \in Q^\omega$ of states, where $q_0 \in Q_0$ and for each $i \ge 0$, $q_{i+1} = \delta(q_i,\alpha_{i+1})$. 
We define $\textbf{Inf}(r)=\{q \in Q \mid \forall i \exists j>i.~r_j = q\}$.
A run $r$ is called accepting if $\textbf{Inf}(r) \cap F \not = \emptyset$.
A word $w$ is accepted by $\mathcal B$ and called a \emph{model} of $\mathcal B$ if there is an accepting run of $\mathcal B$ on $w$.

Furthermore, an \emph{alternating automaton}, whose runs generalize from sequences to trees, is a tuple $\mathcal A = (Q,Q_0,\delta,\Sigma,F)$. $Q,Q_0, \Sigma$, and $F$ are defined as above and $\delta: Q \times \Sigma \rightarrow \mathbb{B}^+{Q}$ being a transition function, which maps a state and a symbol into a Boolean combination of states. Thus, a run(-tree) of an alternating B\"uchi automaton $\mathcal A$ on an infinite word $w$ is a $Q$-labeled tree.
A word $w$ is accepted by $\mathcal{A}$ and called a \emph{model} if there exists a run-tree $T$ such that all paths $p$ trough $T$ are accepting, i.e., $\textbf{Inf}(p) \cap F \not = \emptyset$.



A strongly connected component (SCC) in $\mathcal A$  is a maximal strongly connected component of the graph induced by the automaton.
An SCC is called \emph{accepting} if one of its states is an accepting state in $\mathcal A$. 



\section{Quantitative Hyperproperties}

Quantitative Hyperproperties are properties of sets of computation traces that express a bound on the number of traces that may appear in a certain relation.
In the following, we study quantitative hyperproperties that are specified in terms of HyperLTL formulas.
We consider expressions of the following general form:
\[
\forall \pi_1,\dots,\pi_k.\ \varphi \rightarrow (\# \sigma: A.\ \psi  \triangleleft n)
\]
Both the universally quantified variables $\pi_1,\dots,\pi_k$ and the variable $\sigma$ after the \emph{counting} operator $\#$ are trace variables; $\varphi$ is a HyperLTL formula over atomic propositions $AP$ and free trace variables $\pi_1 \ldots \pi_k$;
$A \subseteq AP$ is a set of atomic propositions;
$\psi$ is a HyperLTL formula over atomic propositions $AP$ and free trace variables $\pi_1 \ldots \pi_k$ and, additionally $\sigma$.
The operator $\triangleleft \in \{<,\leq,=,>,\geq\}$ is a comparison operator; and $n \in \mathbb{N}$ is a natural number.

For a given set of traces $T$ and a valuation of the trace variables $\pi_1,\dots,\pi_k$, the term $\# \sigma: A.\, \psi$ computes the number of traces $\sigma$ in $T$ that differ in their valuation of the atomic propositions in $A$ and satisfy $\psi$.
The expression $\# \sigma: A.\, \psi  \triangleleft n$ is $\mathit{true}$ iff the resulting number satisfies the comparison with $n$.
Finally, the complete expression $\forall \pi_1,\dots,\pi_k.\,\varphi \rightarrow (\# \sigma: A.\, \psi  \triangleleft n)$ is $\mathit{true}$ iff for all combinations $\pi_1,\dots,\pi_k$ of traces in $T$ that satisfy $\varphi$, the comparison $\# \sigma: A.\, \psi  \triangleleft n$ is satisfied.

\begin{example}[Quantitative non-interference]
  \label{quantnon}
Quantitative information-flow policies~\cite{Gray/1991/TowardAMathematicalFoundationForIFSecurity,DBLP:journals/entcs/ClarkHM05,DBLP:conf/ccs/KopfB07,DBLP:journals/jcs/ClarksonMS09} allow the flow of a bounded amount of information.
One way to measure leakage is with \emph{min-entropy}~\cite{Smith/2009/OnTheFoundationsOfQantitativeInformationFlow}, which quantifies the amount of information an attacker can gain given the answer to a single guess about the secret.
The \emph{bounding problem}~\cite{Yasuoka+Terauchi/2010/OnBoundingProblemsOfQuantitativeInformationFlow} for min-entropy is to determine whether that amount is bounded from above by a constant $2^c$, corresponding to $c$ bits. 
We assume that the program whose leakage is being quantified is deterministic, and assume that the secret input to that program is uniformly distributed.
The bounding problem then reduces to determining that there is no tuple of $2^c+1$ distinguishable traces~\cite{Smith/2009/OnTheFoundationsOfQantitativeInformationFlow,Yasuoka+Terauchi/2010/OnBoundingProblemsOfQuantitativeInformationFlow}. Let $O \subseteq AP$ be the set of observable outputs. A simple quantitative information flow policy is then the following quantitative hyperproperty, which bounds the number of distinguishable outputs to $2^c$, corresponding to a bound of $c$ bits of information:
	$$
		\CQuan \sigma: O.\ \mathit{true} \leq 2^c
	$$
                A slightly more complicated information flow policy is quantitative non-interference. In quantitative non-interference, the bound must be satisfied for every individual input. Let $I \subseteq AP$ be the observable inputs to the system. Quantitative non-interference is the following quantitative hyperproperty\footnote{We write $\pi =_A \pi'$ short for $\pi_A = \pi'_A$ where $\pi_A$ is the $A$-projection of $\pi$}:
        $$
	\forall \pi.\, \CQuan \sigma\colon O.\ ( \LTLsquare({\pi} =_I \sigma) ) \leq 2^c
	$$
For each trace $\pi$ in the system, the property checks whether there are more than $2^c$ traces $\sigma$ that have the same observable input as $\pi$ but different observable output.
\end{example}

\begin{example}[Deniability]
A program satisfies \emph{deniability} (see, for example,~\cite{DBLP:journals/pvldb/BindschaedlerSG17,DBLP:journals/popets/ChakrabortiCS17}) when there is no proof that a certain input occurred from simply observing the output, i.e., given an output of a program one cannot derive the input that lead to this output. A deterministic program satisfies deniability when each output can be mapped to at least two inputs. A quantitative variant of deniability is when we require that the number of corresponding inputs is larger than a given threshold. Quantitative deniability can be specified as the following quantitative Hyperproperty:
 $$
\forall \pi.\, \CQuan \sigma \colon I.\, (\LTLsquare (\pi =_O \sigma))  > n
$$
For all traces $\pi$ of the system we count the number of sequences $\sigma$ in the system with different input sequences and the same output sequence of $\pi$, i.e., for the fixed output sequence given by $\pi $ we count the number of input sequences that lead to this output. 
\end{example}

\section{Model Checking Quantitative Hyperproperties}
We present a model checking algorithm for quantitative hyperproperties based on model counting. 
The advantage of the algorithm is that its runtime complexity is independent of the bound $n$ and thus avoids the $n$-fold self-composition necessary for any encoding of the quantitative hyperproperty in \hyperltl.  

Before introducing our novel counting-based algorithm, we start by a translation of quantitative hyperproperties into formulas in \hyperltl and establishing an exponential lower bound for its representation.

\subsection{Standard model checking algorithm: encoding quantitative hyperproperties in HyperLTL}
The idea of the reduction is to check a lower bound of $n$
traces by existentially quantifying over $n$ traces, and to check an
upper bound of $n$ traces by \emph{universally} quantifying over $n+1$ traces.
The resulting HyperLTL formula can be verified using the standard
model checking algorithm for HyperLTL~\cite{conf/post/ClarksonFKMRS14}.

\begin{theorem}
  Every quantitative hyperproperty
  $
  \forall \pi_1,\dots,\pi_k.\ \psi_\iota \rightarrow (\# \sigma: A.\ \psi  \triangleleft n)
  $
  can be expressed as a HyperLTL formula. For $\triangleleft \in \{\leq\} (\{<\})$, the HyperLTL formula has $n+k+1 (\text{resp. } n+k)$ universal trace quantifiers in addition to the quantifiers in $\psi_\iota$ and $\psi$.
  For $\triangleleft \in \{\geq\} (\{>\})$, the HyperLTL formula has $k$ universal trace quantifiers and $n$ $(\text{resp. } n+1)$ existential trace quantifiers in addition to the quantifiers in $\psi_\iota$ and $\psi$. For $\triangleleft \in \{=\}$, the HyperLTL formula has $n+k+1$ universal trace quantifiers and $n$ existential trace quantifiers in addition to the quantifiers in $\psi_\iota$ and $\psi$.
\end{theorem}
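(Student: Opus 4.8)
The plan is to give explicit HyperLTL encodings for each comparison operator $\triangleleft$, treating $\leq$ and $<$ together (upper bounds, encoded by universal quantification), $\geq$ and $>$ together (lower bounds, encoded by existential quantification), and $=$ as the conjunction of a $\geq$-encoding and a $\leq$-encoding. Throughout, the key observation is that two traces $\sigma$ and $\sigma'$ are counted as the \emph{same} witness iff they agree on the atomic propositions in $A$, i.e.\ iff $\LTLsquare(\sigma =_A \sigma')$; so ``there are at least $m$ traces $\sigma$ satisfying $\psi$'' becomes ``there exist $m$ traces, each satisfying $\psi$, that are pairwise distinct on $A$,'' and ``there are at most $m$ traces $\sigma$ satisfying $\psi$'' becomes ``among any $m+1$ traces satisfying $\psi$, two of them agree on $A$.''

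First I would handle the upper bound $\triangleleft\,\in\{\leq\}$ with threshold $n$. We quantify universally over $\pi_1,\dots,\pi_k$ and over $n+1$ additional traces $\sigma_0,\dots,\sigma_n$, and assert
\[
\forall \pi_1\dots\forall\pi_k.\ \forall\sigma_0\dots\forall\sigma_n.\
\Bigl(\psi_\iota \wedge \bigwedge_{i=0}^{n}\psi[\sigma/\sigma_i]\Bigr)\ \rightarrow\ \bigvee_{0\le i<j\le n}\LTLsquare(\sigma_i =_A \sigma_j),
\]
where $\psi[\sigma/\sigma_i]$ denotes $\psi$ with the free variable $\sigma$ renamed to $\sigma_i$ (and the free $\pi_\ell$ left intact). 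This says: whenever $\psi_\iota$ holds of the reference traces, one cannot find $n+1$ pairwise-$A$-distinct traces satisfying $\psi$, which is exactly $\#\sigma\colon A.\,\psi \leq n$ under the hypothesis $\psi_\iota$. The count of quantifiers is $k+(n+1)$ universal ones on top of those already in $\psi_\iota$ and $\psi$; for $\triangleleft\,\in\{<\}$ we instead use $\sigma_0,\dots,\sigma_{n-1}$, giving $k+n$ universal quantifiers. For the lower bound $\triangleleft\,\in\{\geq\}$, we keep the universal $\pi_1,\dots,\pi_k$ but existentially quantify over $n$ traces $\sigma_1,\dots,\sigma_n$ and assert
\[
\forall\pi_1\dots\forall\pi_k.\ \psi_\iota \rightarrow \exists\sigma_1\dots\exists\sigma_n.\
\Bigl(\bigwedge_{i=1}^{n}\psi[\sigma/\sigma_i]\ \wedge\ \bigwedge_{1\le i<j\le n}\neg\,\LTLsquare(\sigma_i =_A \sigma_j)\Bigr),
\]
i.e.\ there really are $n$ pairwise-$A$-distinct traces satisfying $\psi$; this has $k$ universal and $n$ existential quantifiers, and for $\triangleleft\,\in\{>\}$ we use $n+1$ existential traces. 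Finally, for $\triangleleft\,\in\{=\}$ we take the conjunction of the $\geq n$ encoding and the $\leq n$ encoding; pulling the shared $\forall\pi_1\dots\forall\pi_k$ prefix out front (and noting the $\psi_\iota$ hypothesis is common), the combined formula needs $k+(n+1)$ universal quantifiers (the $k$ reference traces plus the $n+1$ traces from the $\leq$ part) together with $n$ existential quantifiers (from the $\geq$ part), matching the statement. I would also remark that, up to renaming the quantifiers in $\psi_\iota$ and $\psi$ apart for each copy, these are syntactically legitimate HyperLTL formulas, since HyperLTL is closed under the Boolean operations and quantifier prefixing used here.

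The routine part is the correctness argument, which is a direct unfolding of the HyperLTL semantics against the definition of $\#\sigma\colon A.\,\psi \triangleleft n$ given in the previous section: a trace set $T$ satisfies the encoding iff for every assignment of $\pi_1,\dots,\pi_k$ making $\psi_\iota$ true, the number of $A$-equivalence classes of traces in $T$ satisfying $\psi$ respects the bound. The one point that needs care — and the only place where something could go wrong — is the interaction between the quantifiers \emph{inside} $\psi$ (and $\psi_\iota$) and the newly introduced ones: when we make $n+1$ copies of $\psi$ we must $\alpha$-rename its bound trace variables in each copy so that the copies do not accidentally share quantified variables, while the \emph{free} variables $\pi_1,\dots,\pi_k$ must be kept identical across copies (they refer to the common reference traces) and $\sigma$ must be renamed to the copy-specific $\sigma_i$. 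Checking that this renaming is always possible — HyperLTL has an infinite supply $\mathcal V$ of trace variables, so it is — and that it preserves the intended semantics is the main obstacle, though a minor one; everything else is bookkeeping on the quantifier counts.
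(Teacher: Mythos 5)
Your encoding is correct and takes essentially the same approach as the paper: upper bounds ($\leq$, $<$) via $n{+}1$ (resp.\ $n$) additional universal trace quantifiers, lower bounds ($\geq$, $>$) via $n$ (resp.\ $n{+}1$) existentials asserting pairwise $A$-difference, and $=$ as the conjunction of the two, with identical quantifier counts. The only differences are presentational --- the paper states the upper-bound formula in contrapositive form (pairwise $A$-difference in the antecedent, $\bigvee_i \neg \psi[\sigma \mapsto \pi'_i]$ in the consequent) and places the new existentials directly in the quantifier prefix, both of which are logically equivalent rearrangements of your formulas.
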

\begin{proof}
For $\triangleleft \in \{\leq \}$, we encode the quantitative hyperproperty $\forall \pi_1,\dots,\pi_k.\ \psi_\iota \rightarrow (\# \sigma: A.\ \psi \triangleleft n)$ as the following HyperLTL formula:
\[
  \forall \pi_1,\dots,\pi_k.\ \forall \pi'_1, \ldots, \pi'_{n+1}.\ \left( \psi_\iota \wedge \bigwedge_{i\neq j} \LTLdiamond({\pi'_i}\neq_A {\pi'_j})\right) \rightarrow \left(\bigvee_i \neg \psi[\sigma \mapsto \pi'_i]\right)  
\]
where $\psi[\sigma \mapsto \pi'_i]$ is the HyperLTL formula $\psi$ with all occurrences of $\sigma$ replaced by $\pi'_i$.
The formula states that there is no tuple of $n+1$ traces $\pi'_1, \ldots, \pi'_{n+1}$ different in the evaluation of $A$, that satisfy $\psi$. In other words, for every $n+1$ tuple of traces $\pi'_1, \ldots, \pi'_{n+1}$ that differ in the evaluation of $A$, one of the paths must violate $\psi$.
For $\triangleleft \in \{ <\}$, we use the same formula, with $\forall \pi'_1, \ldots, \pi'_{n}$ instead of $\forall \pi'_1, \ldots, \pi'_{n+1}$.

For  $\triangleleft \in \{\geq \}$, we encode the quantitative hyperproperty analogously as the HyperLTL formula
\[
  \forall \pi_1,\dots,\pi_{k}.\ \exists \pi'_1, \ldots, \pi'_{n}.\ \psi_\iota \rightarrow \left(  \bigwedge_{i\neq j} \LTLdiamond({\pi'_i}\neq_A {\pi'_j})\right) \wedge \left(\bigwedge_i \psi[\sigma \mapsto \pi'_i]\right)  
\]
The formula states that there exist paths $\pi'_1, \ldots, \pi'_{n}$ that differ in the evaluation of $A$ and that all satisfy $\psi$.
For $\triangleleft \in \{ >\}$, we use the same formula, with $\exists \pi'_1, \ldots, \pi'_{n+1}$ instead of $\forall \pi'_1, \ldots, \pi'_{n}$.
Lastly, for $\triangleleft \in \{ = \}$, we encode the quantitative hyperproperty as a conjunction of the encodings for $\leq$ and for $\geq$.
%
\end{proof}

\begin{example}[Quantitative non-interference in HyperLTL]
	 As discussed in Example~\ref{quantnon}, quantitative non-interference is the quantitative hyperproperty 
	 $$\forall \pi.\, \CQuan \sigma\colon O. \, \LTLsquare({\pi} =_I \sigma) \leq 2^c,$$ where we measure the amount of leakage with min-entropy~\cite{Smith/2009/OnTheFoundationsOfQantitativeInformationFlow}.
	The bounding problem for min-entropy asks whether the amount of information leaked by a system is bounded by a constant $2^c$ where $c$ is the number of bits. This is encoded in HyperLTL as the requirement that there are no $2^{c}+1$ traces distinguishable in their output:
		\[
	\forall \pi_0.\ \forall \pi_1 \ldots \forall \pi_{2^c}.~  \left(\bigwedge_i \LTLsquare ({\pi_i} =_I {\pi_0}) \right) \rightarrow \left(\bigvee_{i \neq j} \LTLsquare ({\pi_i} =_O {\pi_j})\right).
	\]
        This formula is equivalent to the formalization of quantitative non-interference given in \cite{conf/cav/FinkbeinerRS15}.
\end{example}

Model checking quantitative hyperproperties via the reduction to HyperLTL is very expensive. In the best case, when $\triangleleft \in \{\leq, < \}$, $\psi_\iota$ does not contain existential quantifiers, and $\psi$ does not contain universal quantifiers, we obtain an HyperLTL formula without quantifier alternations, where the number of quantifiers grows linearly with the bound $n$. For $m$ quantifiers, the HyperLTL model checking algorithm~\cite{conf/cav/FinkbeinerRS15} constructs and analyzes the $m$-fold self-composition of the Kripke structure. The running time of the model checking algorithm is thus exponential in the bound. If  $\triangleleft \in \{\geq, >, = \}$, the encoding additionally introduces a quantifier alternation. The model checking algorithm checks quantifier alternations via a complementation of B\"uchi automata, which adds another exponent, resulting in an overall doubly exponential running time.

The model checking algorithm we introduce in the next section avoids the $n$-fold self-composition needed in the model checking algorithm of HyperLTL and its complexity is independent of the bound $n$.


\subsection{Counting-based model checking algorithm}
A Kripke structure $K= (S,s_0, \tau, \AP, L)$ violates a quantitative hyperproperty
$$\varphi = \forall \pi_1, \dots, \pi_k.~ \psi_\iota \rightarrow (\CQuan \sigma : A. \psi \triangleleft n )$$ 
if there is a $k$-tuple $t=(\pi_1,\dots,\pi_k)$ of traces $\pi_i \in \TR(K)$ that satisfies the formula
\[
  \exists \pi_1,\dots,\pi_k.\ \psi_\iota \wedge (\# \sigma: A.\ \psi\,  \overline{\triangleleft}\, n)
  \]
where $\overline{\triangleleft}$ is the negation of the comparison operator $\triangleleft$. The tuple~$t$ then satisfies the property $\psi_\iota$ and the number of $(k+1)$-tuples $t'=(\pi_1,\dots, \pi_k, \sigma)$ for $\sigma \in \TR(K)$  that satisfy $\psi$ and differ pairwise in the $A$-projection of $\sigma$ satisfies the comparison $\overline\triangleleft~n$ (The $A$-projection of a sequence $\sigma$ is defined as the sequence $\sigma_A \in (2^A)^\omega$, such that for every position $i$ and every $a \in A$ it holds that $a \in \sigma_A[i]$ if and only if  $a \in \sigma[i]$).
The tuples $t'$ can be captured by the automaton composed of the product of an automaton $A_{\psi_\iota \wedge \psi}$ that accepts all $k+1$ of traces that satisfy both $\psi_\iota$ and $\psi$ and a $k+1$-self composition of $K$. Each accepting run of the product automaton presents $k+1$ traces of $K$ that satisfy $\psi_\iota \wedge \psi$. On top of the product automaton, we apply a special counting algorithm which we explain in detail in Section~\ref{sec:countingalgo} and check if the result satisfies the comparison~$\overline\triangleleft~n$.

Algorithm~\ref{alg:mcqhyper} gives a general picture of our model checking algorithm. The algorithm has two parts. The first part applies if the relation $\overline \triangleleft$ is one of $\{\ge , >\}$. In this case, the algorithm checks whether a sequence over $\AP_\psi$ (propositions in $\psi$) corresponds to infinitely many sequences over $A$. This is done by checking whether the product automaton $B$ has a so-called  \emph{doubly pumped lasso}(DPL), a subgraph with two connected lassos, with a unique sequence over $\AP_\psi$ and different sequences over $A$. Such a doubly pumped lasso matches the same sequence over $\AP_\psi$ with infinitely many sequences over $A$ (more in section~\ref{sec:countingalgo}). If no doubly pumped lasso is found, a projected model counting algorithm is applied in the second part of the algorithm in order to compute either the maximum or the minimum value, corresponding to the comparison operator $\overline \triangleleft$. In the next subsections, we explain the individual parts of the algorithm in detail.
\begin{algorithm}[t]
\scriptsize
\begin{algorithmic}[1]
	\REQUIRE Quantitative Hyperproperty $\varphi =\forall \pi_1 \dots \pi_k.~ \psi_\iota \rightarrow (\CQuan \sigma : A. \psi \triangleleft n )$, Kripke Structure $K= (S,s_0, \tau, \AP, L)$
	\ENSURE $K \models \varphi$
	\STATE $B =\textit{QHLTL2BA}(K,\pi_1,\dots,\pi_k,\psi_\iota \wedge \psi)$
	\STATE /*Check Infinity*/	
	\IF{$\overline \triangleleft \in \{\ge, >\} $}
	\STATE $\textit{ce} = \textit{DPL}(B)$
	\IF{$\textit{ce} \not = \bot$}
		\RETURN \textit{ce}
	\ENDIF
	\ENDIF
	\STATE /*Apply Projected Counting Algorithm*/
	\IF{$\overline \triangleleft \in \{\ge, >\}$}
		\STATE $\textit{ce} = \textit{MaxCount}(B,n,\overline\triangleleft)$
	\ELSE
		\STATE $\textit{ce} = \textit{MinCount}(B,n,\overline\triangleleft)$
	\ENDIF
	\RETURN \textit{ce}
\end{algorithmic}
\caption{\small Counting-based Model Checking of Quantitative Hyperproperties}		
\label{alg:mcqhyper}
\end{algorithm}

\subsection{B\"uchi automata for quantitative hyperproperties} 
For a quantitative hyperproperty $\varphi= \forall \pi_1 \dots \pi_k.~ \psi_\iota \rightarrow (\CQuan \sigma : A. \psi \triangleleft n )$ and a Kripke structure $K=(S,s_0, \tau, \AP, L)$, we first construct an alternating automaton $A_{\psi_{\iota} \wedge \psi}$ for the \hyperltl property $\psi_{\iota} \wedge \psi$. 
Let $A_{\psi_1} = (Q_1,q_{0,1}, \Sigma_2, \delta_1, F_1)$ and $A_{\psi_2} = (Q_2,q_{0,2}, \Sigma_2, \delta_2, F_2)$ be alternating automata for subformulas $\psi_1$ and $\psi_2$. Let $\Sigma= 2^{\AP_\varphi}$ where $AP_\varphi$ are all indexed atomic propositions that appear in $\varphi$. $A_{\psi_{\iota} \wedge \psi}$ is constructed using following rules\footnote{The construction
follows the one presented in  \cite{conf/cav/FinkbeinerRS15} with a slight modification on the labeling of transitions. Labeling over atomic proposition instead of the states of the Kripke structure suffices, as any nondeterminism in the Kripke structure is inherently resolved, because we quantify over trace not paths}: 
\begin{center}
\scriptsize
\begin{tabular}{|c||c|}
\hline
$\varphi = a_\pi$ & $A_\varphi = (\{q_0\},q_0,\Sigma,\delta,\emptyset)$ where $\delta(q_0,\alpha)= (a_\pi \in \alpha)$\\	
\hline
$\varphi = \neg a_\pi$ & $A_\varphi = (\{q_0\},q_0,\Sigma,\delta,\emptyset)$ where $\delta(q_0,\alpha)= (a_\pi \not \in \alpha)$\\	
\hline
$\varphi = \psi_1 \wedge \psi_2$ & $A_\varphi = (Q_1 \cupdot Q_2 \cupdot \{q_0\}, q_0, \Sigma, \delta, F_1 \cupdot F_2)$ \\ 
& where $\delta(q,\alpha) = \delta_1(q_{0,1},\alpha) \wedge \delta_2(q_{0,2},\alpha)$\\
& \quad and $\delta(q,\alpha) = \delta_i(q,\alpha)$ when $q\in Q_i$ for $i\in \{1,2\}$\quad \quad \\
\hline
$\varphi = \psi_1 \vee \psi_2$ & $A_\varphi = (Q_1 \cupdot Q_2 \cupdot \{q_0\}, q_0, \Sigma, \delta, F_1 \cupdot F_2)$ \\ 
& where $\delta(q,\alpha) = \delta_1(q_{0,1},\alpha) \vee \delta_2(q_{0,2},\alpha)$\\
& and $\delta(q,\alpha) = \delta_i(q,\alpha)$ when $q\in Q_i$ for $i\in \{1,2\}$\\
\hline
$\varphi = \LTLcircle \psi_1$ & $A_\varphi = (Q_1 \cupdot\{q_0\}, q_0, \Sigma, \delta, F_1)$  \\
& where $\delta(q,\alpha) = q_{0,1}$ \\
& and $\delta(q,\alpha) = \delta_1(q,\alpha) $ for $q \in Q_1$\\
\hline
$\varphi = \psi_1 \LTLuntil \psi_2$ & $A_\varphi = (Q_1 \cupdot Q_2 \cupdot \{q_0\}, q_0, \Sigma, \delta, F_1 \cupdot F_2)$\\
& where $\delta(q_0,\alpha) = \delta_2(q_{0,2}, \alpha) \vee (\delta_1(q_{0,1},\alpha) \wedge q_0)$\\
& and $\delta(q,\alpha) = \delta_i(q,\alpha)$ when $q\in Q_i$ for $i\in \{1,2\}$\\
\hline
$\varphi = \psi_1 \LTLrelease \psi_2$ & $A_\varphi = (Q_1 \cupdot Q_2 \cupdot \{q_0\}, q_0, \Sigma, \delta, F_1 \cupdot F_2 \cupdot \{q_0\})$\\
& where $\delta(q_0,\alpha) = \delta_2(q_{0,2}, \alpha) \wedge (\delta_1(q_{0,1},\alpha) \vee q_0)$\\
& and $\delta(q,\alpha) = \delta_i(q,\alpha)$ when $q\in Q_i$ for $i\in \{1,2\}$\\
\hline
\end{tabular}
\end{center}
For a quantified formula $\varphi = \exists\pi. \psi_1$, we construct the product automaton of the Kripke structure $K$ and the B\"uchi automaton  of $\psi_1$. Here we reduce the alphabet of the automaton by projecting all atomic proposition in $\AP_\pi$ away: 
\begin{center}
\scriptsize
\begin{tabular}{|c||c|}
\hline
	$\varphi = \exists \pi. \psi_1$ & $A_\varphi = (Q_1 \times S \cupdot \{q_0\}, \Sigma \setminus \AP_\pi,\delta,  F_1 \times S ) $\\
& where $\delta(q_0, \alpha ) = \{(q',s') \mid q' \in \delta_1(q_{0,1},\alpha \cup \alpha'), s' \in \tau(s_0), (L(s_0))_\pi =_{\AP_\pi} \alpha'\}$\\
& and $\delta((q,s),\alpha) = \{(q',s') \mid  q' \in \delta_1(q, \alpha\cup \alpha'), s' \in \tau(s), (L(s))_\pi =_{\AP_\pi} \alpha' \}$\\
\hline

\end{tabular}
\end{center}
Given the B\"uchi automaton for the hyperproperty $\psi_\iota \wedge \psi$ it remains to construct the product with the $k+1$-self composition of $K$. The transitions of the automaton are defined over labels from $\Sigma= 2^{\AP^*}$ where $AP^* ={\AP_\sigma \cup \bigcup_i \AP_{\pi_i}}$. $A_{\psi_{\iota} \wedge \psi}$. This is necessary to identify which transition was taken in each copy of $K$, thus, mirroring a tuple of traces in $K$. For each of the variables $\pi_1,\dots \pi_k$ and $\sigma$ we use following rule: 
\begin{center}
\scriptsize
\begin{tabular}{|c||c|}
\hline
	$\varphi = \exists \pi. \psi_1$ & $A_\varphi = (Q_1 \times S \cupdot \{q_0\}, \Sigma,\delta,  F_1 \times S ) $\\
& where $\delta(q_0, \alpha ) = \{(q',s') \mid q' \in \delta_1(q_{0,1},\alpha), s' \in \tau(s_0), (L(s_0))_\pi =_{\AP_\pi} \}$\\
& and $\delta((q,s),\alpha) = \{(q',s') \mid  q' \in \delta_1(q, \alpha), s' \in \tau(s), (L(s))_\pi =_{\AP_\pi}  \}$\\
\hline

\end{tabular}
\end{center}
Finally, we transform the resulting alternating automaton to an equivalent B\"uchi automaton following the construction of Miyano and Hayashi~\cite{MIYANO1984321}.

\subsection{Counting models of $\omega$-Automata}
\label{sec:countingalgo}
Computing the number of words accepted by a B\"uchi automaton can be done by examining its accepting lassos. 
Consider, for example, the B\"uchi automata over the alphabet $2^{\{a\}}$ in Figure~\ref{fig:automodels}. The automaton on the left  has one accepting lasso $(q_0)^\omega$ and thus has only one model, namely $\{a\}^\omega$. The automaton on the right  has infinitely many  accepting lassos $(q_0\{\})^i\{a\}(q_1(\{\} \vee \{a\}))^\omega$ that accept infinitely many different words all of the from $\{\}^*\{a\}(\{\}\vee \{a\})^\omega$. 
\begin{figure}
\centering
\begin{tikzpicture}
\node(a1)[circle,double,draw,initial]at(0,0){$q_0$};
\node(a2)[circle,draw]at(2,0){$q_1$}; 
\path[->,draw](a1) edge [loop above] node{$a$}(a1);
\path[->,draw](a1) edge node[above]{$\neg a$} (a2);
\path[->,draw](a2) edge [loop above] node{$*$}(a2);
\end{tikzpicture}
\hspace{2cm}
\begin{tikzpicture}
\node(a1)[circle,draw,initial]at(0,0){$q_0$};
\node(a2)[circle,double,draw]at(2,0){$q_1$}; 
\path[->,draw](a1) edge [loop above] node{$\neg a$}(a1);
\path[->,draw](a1) edge node[above]{$a$} (a2);
\path[->,draw](a2) edge [loop above] node{$*$}(a2);
\end{tikzpicture}
\caption[]{\scriptsize B\"uchi automata with one model (left) and infinitely many models (right).}
\label{fig:automodels}
\end{figure}
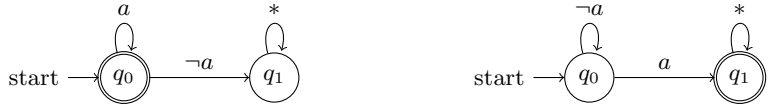
Computing the models of a B\"uchi automaton is insufficient for model checking quantitative hyperproperties as we are not interested in the total number of models. We rather \emph{maximize}, respectively \emph{minimize}, over sequences of subsets of atomic propositions \emph{the number of projected models} of the B\"uchi automaton. For instance, consider the automaton given in Figure~\ref{fig:autohypermodels}.
\begin{figure}
\centering
\begin{tikzpicture}
\node(a1)[circle,draw,initial]at(0,0){$q_0$};
\node(a2)[circle,double,draw]at(2,0){$q_1$}; 
\path[->,draw](a1) edge [loop above] node{$\neg a \wedge b$}(a1);
\path[->,draw](a1) edge node[above]{$a$} (a2);
\path[->,draw](a2) edge [loop above] node{$b$}(a2);
\end{tikzpicture}
\caption[]{\scriptsize A two-state B\"uchi automaton, such that there exist exactly two $\{b\}$-projected models for each $\{a\}$-projected sequence.}
\label{fig:autohypermodels}
\end{figure}
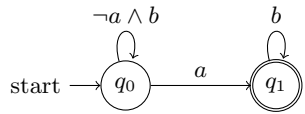
The automaton has infinitely many models. However, the maximum number of sequences $\sigma_b \in 2^{\{b\}}$ that correspond to accepting lassos in the automaton with a unique sequence $\sigma_a \in 2^{\{a\}}$ is two: For example, let $n$ be a natural number. For any model of the automaton and for each sequence $\sigma_a := \{\}^n\{a\}(\{\})^\omega$ the automaton accepts the following two sequences: $\{b\}^n\{\}\{b\}^\omega$ and $\{b\}^\omega$. 
Formally, given a B\"uchi automaton $\mathcal{B}$ over $\mathit{AP}$ and a set $A$, such that $A \subseteq \mathit{AP}$, an \emph{$A$-projected model} (or projected model over $A$) is defined as a sequence $\sigma_A \in (2^\mathit{A})^\omega$ that results in the $A$-projection of an accepting sequence $\sigma \in (2^\mathit{AP})^\omega$.

In the following, we define the maximum model counting problem over automata and give an algorithm for solving the problem. We show  how to use the algorithm for model checking quantitative hyperproperties. 

\begin{definition}[Maximum Model Counting over Automata (MMCA)]
	Given a B\"uchi automaton $B$ over an alphabet $2^\AP$ for some set of atomic propositions $\AP$ and sets $X,Y,Z \subseteq \AP$ the maximum model counting problem is to compute $$\max \limits_{\sigma_Y \in (2^{Y})^\omega} |\{\sigma_X \in (2^{X})^\omega \mid \exists \sigma_Z \in(2^{Z})^\omega.~ \sigma_X\cup\sigma_Y\cup\sigma_Z \in L(B)\}|$$
	where $\sigma \cup \sigma'$ is the point-wise union of $\sigma $ and $\sigma'$. 
\end{definition}
As a first step in our algorithm, we show how to check whether the maximum model count is equal to infinity. 
\begin{definition}[Doubly Pumped Lasso]
For a graph $G$, a doubly pumped lasso in $G$ is a subgraph that entails a cycles $C_1$ and another different cycle $C_2$ that is reachable from $C_1$. 
\end{definition} 
\vskip -1cm 
\begin{figure}
\centering
\begin{tikzpicture}
\coordinate(p1)at(0,0);
\fill (p1) circle (1pt);
\coordinate(p2)at(0.25,0);
\fill (p2) circle (1pt);
\draw[->](p1)--(p2);
\coordinate(p3)at(0.5,0);
\fill (p3) circle (1pt);
\draw[->](p2)--(p3);
\coordinate(p4)at(0.75,0);
\fill (p4) circle (1pt);
\draw[->](p3)--(p4);
\coordinate(p5)at(1,0.25);
\fill (p5) circle (1pt);
\draw[->](p4)--(p5);
\coordinate(p6)at(0.75,.5);
\fill (p6) circle (1pt);
\draw[->](p5)--(p6);
\coordinate(p7)at(0.5,.5);
\fill (p7) circle (1pt);
\draw[->](p6)--(p7);
\coordinate(p8)at(0.25,.25);
\fill (p8) circle (1pt);
\draw[->](p7)--(p8);
\draw[->](p8)--(p3);
\coordinate(connect)at(1.17,0);
\fill (connect) circle (1pt);
\draw[->](p4)--(connect);
\coordinate(q3)at(1.5,0);
\fill (q3) circle (1pt);
\draw[->](connect)--(q3);
\coordinate(q4)at(1.75,0);
\fill (q4) circle (1pt);
\draw[->](q3)--(q4);
\coordinate(q5)at(2,0.25);
\fill (q5) circle (1pt);
\draw[->](q4)--(q5);
\coordinate(q6)at(1.75,.5);
\fill (q6) circle (1pt);
\draw[->](q5)--(q6);
\coordinate(q7)at(1.5,.5);
\fill (q7) circle (1pt);
\draw[->](q6)--(q7);
\coordinate(q8)at(1.25,.25);
\fill (q8) circle (1pt);
\draw[->](q7)--(q8);
\draw[->](q8)--(q3);
\end{tikzpicture}
\hspace{1cm}
\begin{tikzpicture}
\coordinate(p1)at(0,0);
\fill (p1) circle (1pt);
\coordinate(p2)at(0.25,0);
\fill (p2) circle (1pt);
\draw[->](p1)--(p2);
\coordinate(p3)at(0.5,0);
\fill (p3) circle (1pt);
\draw[->](p2)--(p3);
\coordinate(p4)at(0.75,0);
\fill (p4) circle (1pt);
\draw[->](p3)--(p4);
\coordinate(p5)at(1,0.25);
\fill (p5) circle (1pt);
\draw[->](p4)--(p5);
\coordinate(p6)at(0.75,.5);
\fill (p6) circle (1pt);
\draw[->](p5)--(p6);
\coordinate(p7)at(0.5,.5);
\fill (p7) circle (1pt);
\draw[->](p6)--(p7);
\coordinate(p8)at(0.25,.25);
\fill (p8) circle (1pt);
\draw[->](p7)--(p8);
\draw[->](p8)--(p3);
\coordinate(connect)at(1,0);
\fill (connect) circle (1pt);
\draw[->](p4)--(connect);
\coordinate(q5)at(1.25,-0.25);
\fill (q5) circle (1pt);
\draw[->](connect)--(q5);
\coordinate(q6)at(1,-.5);
\fill (q6) circle (1pt);
\draw[->](q5)--(q6);
\coordinate(q7)at(0.75,-.5);
\fill (q7) circle (1pt);
\draw[->](q6)--(q7);
\coordinate(q8)at(0.5,-.25);
\fill (q8) circle (1pt);
\draw[->](q7)--(q8);
\draw[->](q8)--(p3);
\end{tikzpicture}
\caption{\scriptsize Forms of doubly pumped lassos.}
\label{fig:pumpedlassos}
\end{figure}
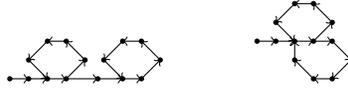
In general, we distinguish between two types of doubly pumped lassos as shown in Figure~\ref{fig:pumpedlassos}.
We call the lassos with periods $C_1$ and $C_2$ the lassos of the doubly pumped lasso.  
A doubly pumped lasso of a B\"uchi automaton $B$ is one in the graph structure of $B$. The doubly pumped lasso is called accepting when $C_2$ has an accepting state. A more generalized formalization of this idea  is given in the following theorem.  
 
\begin{theorem}
	Let $B=(Q,q_0,\delta, 2^\AP, F)$ be a B\"uchi automaton for some set of atomic propositions $\AP = X \cup Y \cup Z$ and let $\sigma' \in ( 2^{Y})^\omega$. The automaton~$B$ has infinitely many $X \cup Y$-projected models $\sigma$ with $\sigma =_{Y} \sigma'$ if and only if $B$ has an accepting doubly pumped lasso with lassos $\rho$ and $\rho'$ such that:
			1) $\rho$ is an accepting lasso
		    2) $\tr(\rho)=_{Y} \tr(\rho')=_{Y}\sigma'$
			3) The period of $\rho'$ shares at least one state with $\rho$
			and 4) $\tr(\rho)\not =_X\tr(\rho')$. 
\label{theo:infmodels}
\end{theorem}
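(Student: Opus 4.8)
The plan is to prove the two implications separately; throughout I treat $\sigma'$ as ultimately periodic (only such a $\sigma'$ can be matched on $Y$ by a lasso trace, which is the case relevant to the algorithm), and the finiteness of $B$ is what makes the left-to-right direction work.

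For the direction ``accepting doubly pumped lasso $\Rightarrow$ infinitely many projected models'' I start from the witness: an accepting lasso $\rho$ (accepting by~(1)), a lasso $\rho'$ whose period $C$ contains a state $q$ that also lies on $\rho$ (by~(3)), with $\tr(\rho)=_Y\tr(\rho')=_Y\sigma'$ and $\tr(\rho)\neq_X\tr(\rho')$ (by~(2),(4)). For each $i\in\nats$ I build the run $r_i$ that follows $\rho$ up to its first visit of $q$, then traverses a closed walk through the cycle $C$ exactly $i$ times (returning to $q$), and then continues exactly as the suffix of $\rho$ from $q$; since that suffix is an accepting run, every $r_i$ is accepting, hence a model of $B$. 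Condition~(2) ensures that inserting copies of the $C$-loop does not disturb the $Y$-projection -- here one uses that $C$ is a cycle, so its $Y$-trace is a rotation of a power of the period of $\sigma'$ -- so $\tr(r_i)=_Y\sigma'$ for all $i$. It then remains to see that the $X\cup Y$-projections of the $r_i$ are pairwise distinct: assuming $\tr(r_i)=_{X\cup Y}\tr(r_j)$ for $i<j$ and cancelling the common prefix together with the $i$ shared copies of the $C$-loop leaves an $\omega$-word equation $\tau=v^{\,j-i}\tau$, which forces $\tau$ to be purely periodic and equal to $v^\omega$; combined with condition~(4) this is contradictory, so the $r_i$ give infinitely many distinct projected models matching $\sigma'$ on $Y$.

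For the converse I pass to a finite B\"uchi automaton: take the product of $B$ with the deterministic monitor for the ultimately periodic word $\sigma'$ and project $Y$ away, obtaining a finite B\"uchi automaton $B'$ over $2^{X\cup Z}$ whose accepting runs are exactly the accepting runs of $B$ with $Y$-trace $\sigma'$, and which by hypothesis has infinitely many $X$-projected models. Since $B'$ has finitely many states, infinitely many of these models are realized by accepting runs that are ultimately confined to a single accepting strongly connected component $D$, and -- because the $X$-labelling already ranges over infinitely many words on the finitely many ways of entering and cycling inside $D$ -- $D$ together with a path into it contains two cycles that share a state but have different iterated $X$-traces, one of them accepting. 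Reading this back through the product yields an accepting lasso $\rho$ into $D$ and a lasso $\rho'$ whose period is the second, pumpable cycle, i.e.\ exactly conditions (1)--(4) with common $Y$-trace $\sigma'$.

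The main obstacle is the distinctness bookkeeping in the first direction, together with the symmetric ``two cycles sharing a state'' step in the second. For $\Leftarrow$, the primitivity argument above handles the typical case, but one must separately treat the degenerate configuration in which the period of $\rho'$ and the period of $\rho$ have the same iterated $X$-trace, so that the $X$-difference promised by condition~(4) actually sits on the divergence between the two lasso stems; then the cycle one pumps must be located there rather than taken to be $C$ directly. For $\Rightarrow$, two distinct accepting lassos with different $X$-traces exist as soon as the projected language is infinite, but the content of condition~(3) -- that one lasso's period meets the other lasso -- is recovered only from the confinement to a single SCC plus the unbounded branching inside it, and making this extraction precise (via a careful pigeonhole on run prefixes within $D$) is the crux.
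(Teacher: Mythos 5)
Your right-to-left direction (``accepting DPL $\Rightarrow$ infinitely many projected models'') contains the decisive gap, and it sits exactly where your parenthetical remark tries to wave it away. Conditions (2)--(4) only say that the two lassos agree with $\sigma'$ on $Y$ \emph{as whole traces}, that the period $C$ of $\rho'$ meets $\rho$ in some state $q$, and that the traces differ \emph{somewhere} on $X$; they do not say that $q$ is visited by $\rho$ and by $\rho'$ at the same position (or even at the same phase of the period of $\sigma'$), nor that the $X$-difference recurs inside a period. Your splicing argument needs both. Concretely, take $X=\{a\}$, $Y=\{b\}$, states $s_0$ (initial, accepting), $s_1,s_2$, transitions $s_0\xrightarrow{\{b\}}s_1$, $s_1\xrightarrow{\{\}}s_0$, $s_0\xrightarrow{\{a,b\}}s_2$, $s_2\xrightarrow{\{\}}s_1$, $s_1\xrightarrow{\{b\}}s_2$, and $\sigma'=(b\,\bar b)^\omega$. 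With $\rho=(s_0s_1)^\omega$ and $\rho'=s_0(s_2s_1)^\omega$ all four conditions hold, yet the only accepting run whose $Y$-projection is $\sigma'$ is $\rho$ itself: pumping the loop $s_1\to s_2\to s_1$ into $\rho$ at $q=s_1$ produces accepting runs, but their $Y$-projections are no longer $\sigma'$, because $q$ occurs at an odd position of $\rho$ and at even positions of $\rho'$. So ``its $Y$-trace is a rotation of a power of the period of $\sigma'$'' does not suffice; you need the rotation that matches the phase at the insertion point, and you need the $X$-difference to occur after that point. Your distinctness step breaks for the same reason: if the $X$-difference promised by (4) lies in the stems, your word equation only shows that all $r_i$ collapse to a single model, which is perfectly consistent with (4). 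The paper's de facto justification is the synchronized product $B_\times$ given right after the theorem, in which the two runs are compared position by position, the shared state is required at aligned positions ($q'_h=q_j$ with $h\le j$) and the $X$-difference is required inside the period ($h>j$); these are precisely the strengthenings your argument (and the theorem's literal wording) is missing, so this direction is not a presentational gap but a step that fails as written.

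For the converse direction, passing to the product with a monitor for the ultimately periodic $\sigma'$ is a sensible move (and it automatically delivers the positional alignment that the other direction lacks), but the key step --- ``the $X$-labelling already ranges over infinitely many words on the finitely many ways of entering and cycling inside $D$, hence two cycles sharing a state with different iterated $X$-traces'' --- is asserted rather than proved, and you concede yourself that making this extraction precise is ``the crux.'' What is needed here is exactly the pigeonhole argument behind Lemma~\ref{lem:maxnummodels}: more than $2^{|Q|}$ projected models force two accepting lassos that share a sufficiently long prefix and then diverge on $X$, which yields the doubly pumped lasso; carried out in the $\sigma'$-synchronized product it also yields condition (2) and the shared state at an aligned position. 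As it stands, neither direction of your proposal is complete, and the first one would need the strengthened, position-aligned reading of conditions (3) and (4) even to be true.
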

To check whether there is a sequence $\sigma' \in (2^Y)^\omega$ such that the number of $X\cup Y$-projected models $\sigma$ of $B$ with $\sigma =_Y \sigma'$ is infinite, we search for a doubly pumped lasso satisfying the constraints given in Theorem~\ref{theo:infmodels}. This can be done by applying the following procedure:

Given a B\"uchi automaton $B=(Q,q_0,2^\AP,\delta,F)$ and sets $X,Y,Z \subseteq \AP$, we construct the following product automaton $B_\times =(Q_\times,q_{\times,0},2^\AP\times 2^\AP, \delta_\times, F_\times)$ where:
	$Q_\times = Q \times Q$,
	$q_{\times,0} = (q_0,q_0)$,
	$\delta_\times = \{(s_1,s_2) \xrightarrow{(\alpha,\alpha')} (s'_1,s_2') \mid s_1 \xrightarrow{\alpha} s_2, s'_1 \xrightarrow{\alpha'} s'_2, \alpha =_{Y} \alpha'\}$ and
	$F_\times = Q\times F$.
The automaton $B$ has infinitely many models $\sigma'$ if there is an accepting lasso $\rho = (q_0,q_0)(\alpha_1,\alpha'_1) \dots ((q_j,q_j')(\alpha_{j+1},\alpha'_{j+1})$ $\dots (q_k,q'_k) (\alpha_{k+1},\alpha'_{k+1}))$ in $B_\times$ such that:
	 $\exists h\leq j.~ q'_h = q_j$, i.e., $B$ has lassos $\rho_1$ and $\rho_2$ that share a state in the period of $\rho_1$
	and $\exists h>j.~ \alpha_h \not =_X \alpha'_h$, i.e., the lassos differ in the evaluation of $X$ in a position after the shared state and thus allows infinitely many different sequence over $X$ for the a sequence over $Y$. 
The lasso $\rho$ simulates a doubly pumped lasso in $B$ satisfying the constraints of Theorem~\ref{theo:infmodels}.

\begin{theorem}
	Given an alternating B\"uchi automaton $A=(Q,q_0,\delta, 2^\AP, F)$ for a set of atomic propositions $\AP = X\cup Y\cup Z$, the problem of checking whether there is a sequence  $\sigma' \in ( 2^{Y})^\omega$ such that $A$ has infinitely many $X\cup Y$-projected models  $\sigma$ with $\sigma =_Y \sigma'$ is \pspace-complete. 
\label{theo:checkinfmodels}
\end{theorem}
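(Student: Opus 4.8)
The plan is to prove membership in \pspace{} and \pspace-hardness separately.

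For membership, I would combine Theorem~\ref{theo:infmodels} with the standard translation from alternating Büchi automata to nondeterministic Büchi automata (Miyano--Hayashi), but being careful not to build that exponential automaton explicitly. Recall that the nondeterministic Büchi automaton $B$ equivalent to the alternating automaton $A$ has states that are pairs of subsets of $Q$ (a ``current'' set and an ``obligation'' set), so $|B| \le 2^{O(|Q|)}$ and each state has a description of polynomial size. By Theorem~\ref{theo:infmodels}, $A$ admits a $Y$-sequence $\sigma'$ with infinitely many $X\cup Y$-projected models agreeing with $\sigma'$ on $Y$ iff the product $B_\times = B \times B$ (with the $Y$-synchronized transition relation defined right before the theorem) has an accepting lasso $\rho = (q_0,q_0)(\alpha_1,\alpha_1')\cdots$ such that some state is revisited forming a period, the two components share a state $q_h' = q_j$ with $h\le j$ inside the period of the first component, and $\alpha_h \ne_X \alpha_h'$ for some position $h>j$. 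I would argue that the existence of such a lasso can be checked by a nondeterministic procedure that guesses the lasso on the fly, storing only: a constant number of states of $B_\times$ (each of polynomial size, since a state of $B$ is a pair of subsets of $Q$), namely the current state, the ``loop-start'' state where the period begins, the witnessing shared state, and bookkeeping bits recording whether an accepting state has been seen and whether the $\ne_X$ event has occurred. The successor relation of $B$ on these polynomial-size state descriptions is computable in polynomial space (it only requires evaluating the Boolean transition function $\delta$ of $A$ on the guessed input letter). Hence the whole search runs in \npspace{} $=$ \pspace{} by Savitch.

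For hardness, I would reduce from a known \pspace-complete problem about alternating Büchi (or just finite-word alternating) automata --- the cleanest is nonemptiness/universality-style problems, or directly the \pspace-complete HyperLTL satisfiability/model-checking fragment, but the most economical is probably a reduction from the intersection-nonemptiness problem for deterministic automata, or from the fact that deciding whether an alternating Büchi automaton accepts \emph{some} word is \pspace-complete. Given an alternating Büchi automaton $A'$ over alphabet $2^{\AP'}$, I would build $A$ over $\AP' \cup \{x\} \cup \{y\}$ (with $X = \{x\}$, $Y=\{y\}$, $Z = \AP'$) that first reads, as its $Z$-component, a word accepted by $A'$, and is designed so that once $A'$ accepts, the automaton $A$ enters a gadget that loops forever and in that loop the proposition $x$ may be chosen freely at each step (while $y$ is forced to a fixed value, or quantified trivially). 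Then: if $L(A') \ne \emptyset$, pick any $\sigma_{\AP'} \in L(A')$; the fixed $\sigma' = \emptyset^\omega$ over $Y$ has infinitely many $X\cup Y$-extensions (all choices of $x$ in the tail), so the answer is ``yes''; conversely if $L(A') = \emptyset$ no word is accepted at all, so the count is $0$ for every $\sigma'$. Since emptiness of alternating Büchi automata is \pspace-complete, this gives \pspace-hardness.

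The main obstacle I anticipate is the membership direction: one must be scrupulous that every object manipulated --- states of the Miyano--Hayashi automaton, states of the product $B_\times$, the indices $h, j$ used to locate the shared state and the $\ne_X$-witness within the lasso --- admits a polynomial-size representation and a polynomial-space-computable successor relation, so that the nondeterministic on-the-fly search never materializes the exponentially many states simultaneously. A secondary subtlety is handling the two topologically distinct shapes of doubly pumped lasso from Figure~\ref{fig:pumpedlassos} uniformly; I would phrase the search purely in terms of the product-automaton lasso condition stated after Theorem~\ref{theo:infmodels}, which already subsumes both shapes, rather than case-splitting on the picture.
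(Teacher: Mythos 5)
Your proposal is correct, but it takes a more explicit and genuinely different route than the paper, whose entire argument for Theorem~\ref{theo:checkinfmodels} is the single remark that ``the lower and upper bound can be given by a reduction from and to the satisfiability problem of LTL.'' For the upper bound, the paper thus appeals (without detail) to an encoding into LTL satisfiability, whereas you argue membership directly: run the doubly-pumped-lasso criterion of Theorem~\ref{theo:infmodels} on the $Y$-synchronized product $B_\times$ of the Miyano--Hayashi automaton with itself, guessed on the fly, storing only a constant number of polynomially-describable states (pairs of subsets of $Q$) plus bookkeeping bits, and conclude via \npspace{} $=$ \pspace. This is arguably more convincing than the paper's one-liner, since it makes explicit exactly why the exponential blow-up of the alternation removal and of the product never has to be materialized; it also correctly treats both shapes of doubly pumped lassos uniformly through the product-lasso condition. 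For the lower bound, you reduce from nonemptiness of alternating B\"uchi automata rather than from LTL satisfiability; the two are inter-reducible (LTL formulas translate linearly into alternating B\"uchi automata), so your choice is equivalent in strength and your reduction idea --- let the $Z$-component be constrained by $A'$ while $x$ is unconstrained, so that $L(A')\neq\emptyset$ yields infinitely many $X\cup Y$-projected models for the trivial $Y$-sequence and $L(A')=\emptyset$ yields none --- is sound. One small repair: the phrase ``once $A'$ accepts, $A$ enters a gadget'' is not meaningful for B\"uchi acceptance, which is a property of the whole infinite run rather than an event; but the gadget is unnecessary anyway, since it suffices to let $A$ simulate $A'$ on the $\AP'$-component while ignoring $x$ and $y$ at every step, which gives exactly the dichotomy you describe.
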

The lower and upper bound for the problem can be given by a reduction from and to the satisfiability problem of LTL \cite{Baier:2008:PMC:1373322}.
Due to the finite structure of B\"uchi automata, if the number of models  of the automaton exceed the exponential bound $2^{|Q|}$, where $Q$ is the set of states, then the automaton has infinitely many models. 
\begin{lemma}
\label{lem:maxnummodels}
	For any B\"uchi automaton $B$, the number of models of $B$ is less or equal to $2^{|Q|}$ otherwise it is $\infty$. 
\end{lemma}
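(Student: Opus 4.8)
The plan is to prove the statement in its contrapositive form: if $B$ has strictly more than $2^{|Q|}$ models, then it has infinitely many. The heart of the matter is a single pumping observation, which I would establish first. Suppose $w = a_0 a_1 \cdots \in L(B)$ has an accepting run $r = q_0 q_1 \cdots$ in which some state repeats, say $q_i = q_j$ with $i < j$, and put $d = j - i$. For every $k \ge 0$, inserting $k$ copies of the cycle $q_i \cdots q_j$ into $r$ produces a legal run of $B$ whose infinite suffix — and hence whose set $\textbf{Inf}$ — is the same as that of $r$; it is therefore accepting, and it reads the word $w_k := a_0 \cdots a_{i-1}\,(a_i \cdots a_{j-1})^k\,a_j a_{j+1}\cdots \in L(B)$. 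A short index computation shows that, for $k \ne k'$, $w_k = w_{k'}$ holds if and only if $a_m = a_{m + (k'-k)d}$ for all $m \ge i$. Consequently, if the suffix $w[i,\infty]$ is not of the form $\beta^\omega$ for any finite block $\beta$, the words $w_k$ are pairwise distinct and $L(B)$ is infinite. I will call a repeated state $q_i = q_j$ of an accepting run \emph{bad} when $w[i,\infty]$ is not of this purely periodic form.

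By the observation it suffices to show that if $B$ admits no accepting run with a bad repetition, then $|L(B)| \le 2^{|Q|}$. So fix such a $B$, take any $w \in L(B)$ with accepting run $r = q_0 q_1 \cdots$, and let $q_a = q_b$ with $a < b \le |Q|$ be its first repetition (so $q_0, \dots, q_{b-1}$ are pairwise distinct). The hypothesis gives that $w[a,\infty] = v^\omega$ for some block $v$, so $w = u\,v^\omega$ with $u = a_0\cdots a_{a-1}$ and $|u| \le |Q|-1$. It remains to bound $|v|$ by $|Q|$ as well: the tail $q_a q_{a+1}\cdots$ of $r$ eventually settles inside a single strongly connected component of $B$ (the SCC graph is acyclic), which must be accepting; fixing a recurrent accepting state $f$ of that component and applying the no-bad-repetition hypothesis to the successive returns of $r$ to $f$, one extracts an accepting lasso reading exactly $w$ whose loop — of length at most $|Q|$ — already traces a periodic block of $w$. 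With both $|u|$ and $|v|$ bounded by $|Q|$, there are at most $2^{|Q|}$ ultimately periodic words $u\,v^\omega$ that can occur as models, which completes the contrapositive.

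I expect the main obstacle to be precisely this last step — obtaining the uniform bound $|v| \le |Q|$ on the period. Whereas the bound on the pre-period $|u|$ falls out immediately from the first repetition of $r$, the period bound calls for more careful surgery on runs: one has to pick a recurrent accepting state $f$ of the terminal SCC, use its shortest return loop, and invoke the ``no bad repetition'' hypothesis once for every longer return of $r$ to $f$ in order to force the word read around that loop to agree with the genuine periodic block of $w$. A delicate point here is that the accepting state must stay inside the chosen loop throughout (hence starting from $f$ rather than simplifying an arbitrary cycle, since naively shortening a loop can change the word it reads). By contrast, the pumping calculation of the first paragraph and the fact that an infinite run eventually remains in one SCC are entirely routine.
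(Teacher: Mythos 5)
Your overall route is the paper's own dichotomy (pump a cycle whose suffix is aperiodic to get infinitely many models; otherwise every model is a short lasso word) worked out at the level of words, and the pumping half is essentially fine --- note only that your ``if and only if'' holds just in the direction you use: for $i=0$, $d=1$, $w=(ab)^\omega$ and $k'-k=2$ the condition $a_m=a_{m+2}$ holds although $w_0\neq w_2$. The first genuine gap is the final counting step. From $|u|\le|Q|$ and $|v|\le|Q|$ you conclude ``at most $2^{|Q|}$ ultimately periodic words $u\,v^\omega$'', but over the alphabet $2^{\mathit{AP}}$ the number of such words is of order $\bigl(2^{|\mathit{AP}|}\bigr)^{2|Q|}$, not $2^{|Q|}$. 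This is not a presentational slip that more care could repair: a two-state automaton over $2^{\{a,b,c\}}$ that reads one arbitrary letter and then loops forever on $\emptyset$ in an accepting state has exactly eight models --- finitely many, yet more than $2^{|Q|}=4$. So no argument of this shape can deliver the literal bound $2^{|Q|}$; what your argument actually yields (and what the paper's own three-line pigeonhole/doubly-pumped-lasso sketch, which is no more careful on this point, also at best yields) is the dichotomy ``either every model is $u\,v^\omega$ with $|u|,|v|\le|Q|$, hence there are at most exponentially many in $|Q|\cdot|\mathit{AP}|$, or there are infinitely many''. You should state that bound rather than assert $2^{|Q|}$.

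The second gap is the period bound $|v|\le|Q|$, which you rightly identify as the crux but whose proposed resolution does not do the work. Applying the no-bad-repetition hypothesis to the successive returns of $r$ to the recurrent accepting state $f$ only re-establishes that the corresponding suffixes of $w$ are purely periodic --- which you already know from the first repetition --- and the gaps between returns to $f$ need not be bounded by $|Q|$, so no loop of length at most $|Q|$ reading a genuine period block of $w$ is extracted this way. A working argument needs an extra idea: pick inside the periodic tail a repetition $q_s=q_t$ with $d=t-s\le|Q|$, let $y=w[s,\infty]$ (minimal period $m$) and $\beta=w[s]\cdots w[t-1]$, and observe that the \emph{pumped} runs are again accepting runs of $B$, so the no-bad-repetition hypothesis applied to \emph{them} makes every word $\beta^{K}y$ purely periodic, with the same minimal period $m$ as its tail $y$. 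For $K$ with $Kd\ge m+d$ the prefix $\beta^{K}$ then carries both periods $d$ and $m$, so by Fine--Wilf it has period $\gcd(d,m)$, forcing $\beta^{K}y=\gamma^\omega$ with $|\gamma|\le d\le|Q|$; if $m>|Q|$ this contradicts minimality, hence $m\le|Q|$. Without this step (or an equivalent use of the finiteness of $L(B)$ on the pumped words), ``invoking the hypothesis once for every longer return to $f$'' does not force the word read around a short loop to agree with the period of $w$, and the bound on $|v|$ remains unproven.
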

\begin{proof}
Assume the number of models is larger than $2^{|Q|}$ then there are  more than $2^{|Q|}$ accepting lassos in $B$. By the pigeonhole principle, two of them share the same $2^{|Q|}$-prefix. Thus, either they are equal or we found doubly pumped lasso in $B$.  	
\end{proof}

\begin{corollary}
	Let a B\"uchi automaton $B$ over a set of atomic propositions $\AP$ and sets $X,Y \subseteq \AP$. For each sequence $\sigma_Y \in (2^{Y})^\omega$ the number of $X\cup Y$-projected models $\sigma $ with $\sigma =_Y \sigma_Y $ is less or equal than $2^{|Q|}$ otherwise it is $\infty$. 
\label{cor:exporinfinity}
\end{corollary}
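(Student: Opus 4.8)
The plan is to prove this as the relativization of Lemma~\ref{lem:maxnummodels} to a fixed sequence $\sigma_Y$, using Theorem~\ref{theo:infmodels} to dispatch the ``infinite'' case. Fix $\sigma_Y\in(2^Y)^\omega$ and let $N$ be the number of $X\cup Y$-projected models $\sigma$ of $B$ with $\sigma=_Y\sigma_Y$. If $B$ has an accepting doubly pumped lasso of the kind described in Theorem~\ref{theo:infmodels} with $\sigma'=\sigma_Y$, then $N=\infty$ and we are done, so assume there is no such doubly pumped lasso; it then remains to bound $N$ by $2^{|Q|}$.

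To bound $N$ I would organize the witnesses of the projected models as the infinite branches of a finitely-branching tree $\mathcal T$: a node at depth $i$ is a pair $(\tau,R)$ where $\tau\in(2^X)^i$ is a prefix of an $X$-sequence and $R\subseteq Q$ is the set of states $B$ can occupy after reading a word with $X$-projection $\tau$ and $Y$-projection $\sigma_Y[0,i-1]$; one keeps only nodes from which an accepting run with $Y$-projection $\sigma_Y[i,\infty]$ still exists, so that the infinite branches of $\mathcal T$ are exactly the sequences being counted. Since the second component ranges over subsets of $Q$, every depth of $\mathcal T$ carries at most $2^{|Q|}$ distinct node labels, and, because $B$ is memoryless and $\sigma_Y[i,\infty]$ is fixed, two nodes with the same label have the same set of accepting $X$-continuations. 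The core step is a pumping argument: if two branches ever run through two same-label nodes while differing in their $X$-content on the segment between that label and a later repetition of it, that intervening block of $X$-letters can be iterated arbitrarily often, producing infinitely many branches; but this configuration is precisely an accepting doubly pumped lasso as in Theorem~\ref{theo:infmodels}, which we excluded. With it ruled out, distinct branches can differ only in finite prefixes in a way constrained by the at most $2^{|Q|}$ labels per level, and the bound $N\le 2^{|Q|}$ then follows by the same prefix-pigeonhole used in Lemma~\ref{lem:maxnummodels}.

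The step I expect to be the main obstacle is exactly this last pumping/counting argument, because — unlike in Lemma~\ref{lem:maxnummodels} — $\sigma_Y$, and hence every model counted here, need not be ultimately periodic, so one cannot simply enumerate short accepting lassos; the count has to be controlled through the width of the aperiodic tree $\mathcal T$, and the pumping step needs care in the degenerate case where iterating the pumped block merely reproduces the continuation one started from (there one re-anchors the pump below the first genuine $X$-divergence of two same-label branches, and treats $X=\emptyset$, where $N\le 1$, separately). A secondary technicality is fitting the B\"uchi acceptance condition into the finite node label while keeping the label count at $2^{|Q|}$ (a naive Miyano--Hayashi pair of subsets would cost $2^{2|Q|}$); as in the proof of Lemma~\ref{lem:maxnummodels} one sidesteps this by reasoning with accepting runs whose sets of visited states already suffice, rather than with full acceptance bookkeeping.
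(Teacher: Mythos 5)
Your overall outline---dispatch the infinite case through Theorem~\ref{theo:infmodels}, then bound the remaining count by a pigeonhole argument---is the same route the paper takes: there the corollary is obtained simply by relativizing the lasso-prefix pigeonhole of Lemma~\ref{lem:maxnummodels} to the models whose $Y$-projection is the fixed $\sigma_Y$. You correctly identify the weak point of that relativization (for an aperiodic $\sigma_Y$ the counted models need not have lasso runs at all), and you try to repair it with a subset-labelled tree. But the repair does not go through, and the gap sits exactly at the step you flag as the main obstacle.

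The decisive step fails twice. First, ``at most $2^{|Q|}$ distinct labels per level'' does not bound the number of infinite branches by $2^{|Q|}$: many distinct $X$-prefixes at the same depth can carry the same subset label $R$, and two same-labelled nodes merely have isomorphic sets of accepting $X$-continuations---this is consistent with any finite number of branches, so assuming more than $2^{|Q|}$ of them produces no contradiction by label-counting alone. (Indeed, an automaton that reads arbitrary letters for $|Q|-1$ steps and afterwards forces all propositions of $X$ to be false admits, for every $\sigma_Y$, $2^{|X|\cdot(|Q|-1)}$ matching $X\cup Y$-projected models, so no argument that only counts subset labels can reach the stated constant.) Second, the pumping argument meant to rescue the count is unavailable in precisely the case that matters: iterating the intervening block of a run also repeats its $Y$-letters, so the infinitely many words you generate no longer have $Y$-projection $\sigma_Y$ and are not counted; conversely, an accepting doubly pumped lasso ``with $\sigma'=\sigma_Y$'' in the sense of Theorem~\ref{theo:infmodels} forces $\sigma_Y$ to be ultimately periodic, so for aperiodic $\sigma_Y$ your case split degenerates into the bare, unproved claim $N\le 2^{|Q|}$. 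What is missing is an argument bounding the number of viable $X$-prefixes per level (for instance by tying branches to runs of a product/subset construction along $\sigma_Y$), not merely the number of labels; as it stands, the finite-case bound is asserted rather than proved.
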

From Corollary~\ref{cor:exporinfinity}, we know that if no sequence $\sigma_Y \in (2^Y)^\omega$ matches to infinitely many $X \cup Y$-projected models then the number of such models is bound by $2^{|Q|}$. Each of these models has a run in $B$ which ends in an accepting strongly connected component. Also from Corollary~\ref{cor:exporinfinity}, we know that every model has  a lasso run of length $|Q|$. 
For each finite sequence $w_Y$ of length $|w_Y|= |Q|$ that reaches an accepting strongly connected component, we count the number $X \cup Y$-projected words $w$ of length $|Q|$ with $w =_Y w_Y$ and that end in an accepting strongly connected component. This number is equal to the maximum model counting number.

\begin{figure}[t]
\begin{minipage}[]{.52\textwidth}
\begin{algorithm}[H]
\scriptsize 
\begin{algorithmic}[1]
	\REQUIRE $B = (Q,q_0,2^\AP,\delta,F)$, disjoint $X,Y,Z\subseteq \AP$, $n\in \nats$ 
	\ENSURE $\#_{X,Y,Z}(B)>n$
	\STATE $SCC = acceptingSCC(B) $
	\STATE $i=1$
	\STATE $W = \bigcup \limits_{S \in SCC} S$
	\WHILE{$i\leq|Q|$}
		\STATE $i = i+1$
		\FOR{$q\in W$}
			\FOR{$(q',\alpha,q)$}
				\STATE $W'= W'\cup \{q'\}$
				\FOR{$\sigma\in \Pi(q)$}
					\STATE $\Pi'(q') = \Pi'(q') \cup \{\alpha_{Y\cup X}\cdot \sigma\}$
				\ENDFOR
			\ENDFOR
		\ENDFOR
		\STATE $W = W'$
		\STATE $W' = \emptyset$
		\FOR{$q \in W$}
			\STATE $\Pi'(q)= \emptyset$
		\ENDFOR
	\ENDWHILE 
	\RETURN $\max_{X,Y,Z} \Pi(q_0)\ge n$
\end{algorithmic}
\caption{Maximum Model Counting}	
\label{alg:counting}
\end{algorithm}
\end{minipage}\begin{minipage}[]{.4\textwidth}
\begin{figure}[H]
	\begin{tikzpicture}
	\node[state](q3) {$q_3$};
	\node[state](q1) [above right=1cm and 2cm of q3]  {$q_1$};
	\node[state](q2) [below right=1cm and 2cm of q3] {$q_2$};
	
	\node[right of=q1] (dot1) {$\ldots$};
	\node[right of=q2] (dot2) {$\ldots$};
	\node[left of=q3] (dot3) {$\ldots$};
	
	\node[above of=q1] (set1) {$\Pi(q_1):=\{\sigma_1,\ldots,\sigma_k\}$};
	\node[below of=q2] (set2) {$\Pi(q_2):=\{\sigma'_1,\ldots,\sigma'_j\}$};
	\node[right=0.4cm] (set3) {$\Pi(q_3):=\{\alpha_1\sigma_1,\ldots,\alpha_1\sigma_k\}$};
	\node[below right=0.2cm] (set4) {$~~~~~~~~~~~\cup \{\alpha_2\sigma'_1,\ldots,\alpha_2\sigma_j\}$};
	
	\path [->] (q3) edge [bend left] node [below] {$\alpha_1$} (q1);
	\path [->] (q3) edge [bend right] node [above] {$\alpha_2$} (q2);
	\end{tikzpicture}
\end{figure}
\end{minipage}
\caption{\scriptsize Maximum Model Counting Algorithm (left) and a Sketch of a step in this algorithm (right): Current elements of our working set are $q_1,q_2 \in W$ and $q_3 \in W'$. If $i=0$, i.e., we are in the first step of the algorithm, then $q_1$ and $q_2$ are states of accepting SCCs.}
\end{figure}

Algorithm~\ref{alg:counting} describes the procedure. An algorithm for the minimum model counting problem is defined in similar way. 
The algorithm works in a backwards fashion starting with states of accepting strongly connected components. In each iteration $i$, the algorithm maps each state of the automaton with $X \cup Y$-projected words of length $i$ that reach an accepting strongly connected component. After $|Q|$ iterations, the algorithm determines from the mapping of initial state $q_0$ a $Y$-projected word of length $|Q|$ with the maximum number of matching $X\cup Y$-projected words.

\begin{theorem}
	The decisional version of the maximum model counting problem over automata (MMCA), i.e. the question whether the maximum is greater than a given natural number $n$, is in $\mathit{NP}^{\#P}$.
\end{theorem}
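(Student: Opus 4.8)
The plan is to decompose the decision into a test for infinity followed by a single ``guess a candidate $Y$-word, then count'' step. Concretely I would build a nondeterministic polynomial-time machine that proceeds along one of two kinds of branches. On the first kind of branch it guesses and verifies a witness that the maximum equals $\infty$: by Theorem~\ref{theo:infmodels} together with the product construction $B_\times$ described after it, the maximum is $\infty$ if and only if $B_\times$ contains an accepting lasso whose two component lassos share a state in the period of the first and disagree on $X$ at some later position; since $B_\times$ has at most $|Q|^2$ states this can be guessed and checked in polynomial time, and if it exists the machine accepts (recall $n \in \nats$, so $\infty > n$). On the second kind of branch the machine guesses a finite $Y$-labelled word $w_Y \in (2^Y)^{|Q|}$ of length $|Q|$ -- of size polynomial in the input -- queries a $\#P$ oracle once to obtain a number $c$, and accepts if and only if $c > n$, a polynomial-time comparison of two binary numbers.

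It remains to exhibit the oracle as a $\#P$ function and to argue correctness. Define $g$ to map $(B,X,Y,Z,w_Y)$ to the number of $X$-labelled words $w_X \in (2^X)^{|Q|}$ such that $B$ has a path of length $|Q|$ from $q_0$ to a state lying in an accepting SCC, every transition of which carries a label agreeing with $w_X$ on $X$ and with $w_Y$ on $Y$ (the values on $Z$ being unconstrained). Each candidate $w_X$ has polynomial length, the accepting SCCs of $B$ are computed once in polynomial time, and the predicate ``this particular $w_X$ belongs to the counted set'' is ordinary graph reachability in $B$ with the $X\cup Y$-labels fixed and the $Z$-labels free, hence decidable in polynomial time; therefore $g \in \#P$. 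For correctness, assume first that the maximum is finite. By Corollary~\ref{cor:exporinfinity} every $\sigma_Y$ is then matched by at most $2^{|Q|}$ projected models, and -- by the argument underpinning Algorithm~\ref{alg:counting} -- each such model is represented by a lasso run of $B$ of length at most $|Q|$ reaching an accepting SCC, so $g(B,X,Y,Z,w_Y)$ counts exactly the $X\cup Y$-projected models with $Y$-projection $w_Y$, and the maximum of $g$ over all choices of $w_Y$ equals the MMCA value. Thus some branch of the second kind accepts exactly when the finite maximum exceeds $n$, while some branch of the first kind accepts exactly when the maximum is $\infty$; since the maximum is $\infty$ if and only if some $\sigma_Y$ is matched by infinitely many models (again by Corollary~\ref{cor:exporinfinity}), the machine accepts if and only if the maximum exceeds $n$. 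This places the decisional MMCA problem in $\mathit{NP}^{\#P}$.

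The step I expect to be the main obstacle is the claim that $g$ is a genuine $\#P$ function: the counted set is defined with an existential quantifier over the $Z$-completion $\sigma_Z$, and one has to observe that ``there exists a $Z$-completion admitting a length-$|Q|$ run into an accepting SCC consistent with $w_X \cup w_Y$'' collapses to plain reachability in the $X\cup Y$-restricted transition graph of $B$, which is polynomial time, so the per-element test never leaves polynomial time. A related point requiring care is that counting length-$|Q|$ consistent prefixes reaching an accepting SCC yields neither an over- nor an undercount of the finitely many $X\cup Y$-projected models with the prescribed $Y$-projection; this is exactly what the correctness of Algorithm~\ref{alg:counting} supplies, and it crucially relies on the finiteness and the length bound from Corollary~\ref{cor:exporinfinity}.
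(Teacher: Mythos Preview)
Your proof is correct and follows the same high-level strategy as the paper: guess a $Y$-sequence nondeterministically and use a $\#P$ oracle to count the matching $X$-projections. The paper's own proof is a three-line sketch that literally guesses an infinite sequence $\sigma_Y$ and defers the counting step to a citation, so your version is in fact more complete: you resolve the ``infinite guess'' by restricting to length-$|Q|$ prefixes via Corollary~\ref{cor:exporinfinity} and the correctness of Algorithm~\ref{alg:counting}, you give an explicit $\#P$ verifier for $g$, and you handle the $\infty$ case separately through the doubly-pumped-lasso test in $B_\times$. None of these refinements change the approach, but they close gaps that the paper's proof leaves to the reader or to the cited reference.
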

\begin{proof}
	Let a B\"uchi automaton over an alphabet $2^\mathit{AP}$ for a set of atomic propositions $\mathit{AP}$ and sets $\AP_X,\AP_Y,\AP_Z \subseteq \AP$ and a natural number $n$ be given.
	We construct a nondeterministic Turing Machine $M$ with access to a $\#P$-oracle as follows: $M$ guesses a sequence $\sigma_Y \in 2^\mathit{AP_Y}$. It then queries the oracle, to compute a number $c$, such that $c = |\{\sigma_X \in (2^{\AP_X})^\omega \mid \exists \sigma_Z \in(2^{\AP_Z})^\omega.~ \sigma_X\cup\sigma_Y\cup\sigma_Z \in L(B)\}|$, which is a $\#P$ problem~\cite{DBLP:conf/lata/FinkbeinerT14}. It remains to check whether $n>c$. If so, $M$ accepts.
\end{proof}
The following theorem summarizes the main findings of this section, which establish, depending on the property, an exponentially or even doubly exponentially better algorithm (in the quantitative bound) over the existing model checking algorithm for HyperLTL.

\begin{theorem}
Given a Kripke structure $K$ and a quantitative hyperproperty $\varphi$ with bound $n$, the problem whether $K \models \varphi$ can be decided in logarithmic space in the quantitative bound $n$ and in polynomial space in the size of $K$.
\end{theorem}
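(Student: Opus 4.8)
The plan is to run Algorithm~\ref{alg:mcqhyper} and to show, phase by phase, that every object it manipulates has size polynomial in $|K|$ or logarithmic in $n$, and that the bound $n$ never influences the size of any automaton or the length of any path that is explored. Write $\varphi = \forall \pi_1,\dots,\pi_k.\, \psi_\iota \to (\CQuan\sigma\colon A.\, \psi \triangleleft n)$. First I would recall the characterization of violations established above: $K \not\models \varphi$ iff some $k$-tuple of traces of $K$ satisfies $\exists \pi_1,\dots,\pi_k.\, \psi_\iota \wedge (\CQuan\sigma\colon A.\, \psi\ \overline\triangleleft\ n)$, and the witnessing $(k{+}1)$-tuples are exactly the accepting runs of the B\"uchi automaton $B = \textit{QHLTL2BA}(K,\pi_1,\dots,\pi_k,\psi_\iota\wedge\psi)$. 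Inspecting that construction --- the linear-size alternating automaton for $\psi_\iota\wedge\psi$, the Miyano--Hayashi removal of alternation, and the products with the $(k{+}1)$-fold self-composition of $K$ --- gives $|Q_B| = 2^{O(|\psi_\iota\wedge\psi|)}\cdot|S|^{k+1}$, which for a fixed formula is polynomial in $|K|$ and, crucially, completely independent of $n$. Moreover $B$ can be produced transition by transition in space $O(\log|Q_B|)$, so it need never be stored in full.

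If $\overline\triangleleft \in \{\ge,>\}$, the algorithm first performs the infinity test. By Theorem~\ref{theo:infmodels} together with Corollary~\ref{cor:exporinfinity}, some $Y$-projection of $B$ is matched by infinitely many $X\cup Y$-projected models iff the product $B_\times = B\times B$ constructed after Theorem~\ref{theo:checkinfmodels} contains an accepting lasso of length at most $|Q_B|$ satisfying the shared-state and $X$-difference conditions. Guessing such a lasso step by step is a nondeterministic reachability computation over $B_\times$ using only $O(\log|Q_\times|) = O(\log|K|)$ work space; by Savitch's theorem this lies in \pspace\ in $|K|$, with no dependence on $n$. If such a lasso is found the count is $\infty$, which satisfies $\overline\triangleleft\ n$, so the algorithm correctly reports a counterexample.

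If no doubly pumped lasso exists, Lemma~\ref{lem:maxnummodels} and Corollary~\ref{cor:exporinfinity} bound every finite projected count by $2^{|Q_B|}$, so the quantity to be compared with $n$ --- a maximum (once $\infty$ has been excluded) or a minimum --- fits in $|Q_B| = \mathrm{poly}(|K|)$ bits. Deciding that comparison is the decisional MMCA problem or its minimization dual, which we have shown lies in $\np^{\#P}$; since $\#P \subseteq \pspace$ and an $\np$ computation with a \pspace\ oracle can itself be carried out in \pspace, this is decidable in space polynomial in $|B|$, i.e.\ polynomial in $|K|$. The bound $n$ is consumed only as the right-hand side of this one arithmetic comparison between a $\mathrm{poly}(|K|)$-bit number and the binary encoding of $n$, costing $O(\log n)$ extra space. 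The case where $\triangleleft$ is ${=}$ is handled by running the $\le$- and $\ge$-procedures one after the other and reusing the work space. Composing the phases gives the claimed logarithmic-in-$n$, polynomial-in-$|K|$ bound.

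\textbf{Main obstacle.}
The delicate point is the counting phase. Algorithm~\ref{alg:counting} as written memoizes, for each state, a \emph{set} $\Pi(q)$ of projected words, and this set can be exponential in $|Q_B|$ --- doubly exponential in $|K|$ --- so a literal implementation breaks the space budget. To stay within polynomial space in $|K|$ one must not materialize the sets $\Pi(q)$ but instead recompute, by a non-memoizing $\#P$-style traversal, the number of length-$|Q_B|$ $X\cup Y$-projected words that agree with a guessed $Y$-projection and reach an accepting SCC, and feed that number to the nondeterministic choice of the optimal $Y$-projection; this is exactly the simulation behind $\np^{\#P}\subseteq\pspace$. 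One must also be scrupulous that $n$, given in binary, enters the computation nowhere except the final numeric test, so that automaton construction, the doubly-pumped-lasso search, and the counting traversal are all genuinely $n$-independent and only the closing comparison pays the $O(\log n)$ price.
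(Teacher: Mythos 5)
Your argument follows exactly the route the paper intends for this theorem (which it states as a summary of the section, without a separate proof): construct the product automaton $B$ whose size is polynomial in $|K|$ and independent of $n$, run the doubly-pumped-lasso test as an \nlogspace/\pspace{} reachability check on $B_\times$, bound all finite projected counts by $2^{|Q|}$ via Lemma~\ref{lem:maxnummodels} and Corollary~\ref{cor:exporinfinity}, and perform the counting/comparison in the $\mathit{NP}^{\#P}\subseteq\pspace$ style so that $n$ enters only in the final comparison, costing $O(\log n)$ space. Your explicit observation that Algorithm~\ref{alg:counting} must be re-implemented by enumeration rather than by memoizing the sets $\Pi(q)$ is a correct and necessary sharpening (the only slip is calling those sets ``doubly exponential in $|K|$'' --- they are exponential in a polynomial of $|K|$ --- which does not affect the argument).
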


\section{A Max\#Sat-based Approach}
For existential \hyperltl formulas $\psi_\iota$ and $\psi$, we give a more practical model checking approach by  encoding the automaton-based construction presented in Section~4 into a propositional formula.

 Given a Kripke structure $K=(S,s_0, \tau, \AP_K, L)$ and a quantitative hyperproperty $\varphi=\forall \pi_1, \dots, \pi_k.~ \psi_\iota \rightarrow (\CQuan \sigma :A.~ \psi) \triangleleft n $ over a set of atomic propositions $\AP_\varphi\subseteq\AP_K$ and bound $\mu$, our algorithm constructs a propositional formula $\phi$ such that,
 every satisfying assignment of $\phi$ uniquely encodes a tuple of lassos $(\pi_1, \dots, \pi_k,\sigma)$ of length $\mu$ in $K$, where $(\pi_1, \dots, \pi_k)$ satisfies $\psi_\iota$ and $(\pi_1, \dots, \pi_k,\sigma)$ satisfies $\psi$. 
 To compute the values $\max \limits_{(\pi_1,\dots,\pi_k)}|\{\sigma_A \mid (\pi_1,\dots,\pi_k,\sigma) \models \psi_\iota \wedge \psi \}|$ (in case $\triangleleft \in \{\leq, <\}$) or $\min \limits_{(\pi_1,\dots,\pi_k)}|\{\sigma_A \mid (\pi_1,\dots,\pi_k,\sigma) \models \psi_\iota \wedge \psi \}|$ (in case $\triangleleft \in \{\ge, >\}$),  we pass $\phi$ to a maximum model counter, respectively, to a minimum model counter with the appropriate sets of counting and maximization, respectively, minimization propositions.
  From Lemma~\ref{lem:maxnummodels} we know that it is enough to consider lassos of length linear in the product automaton. The size of $\phi$ is thus exponential in the size of $\varphi$ and polynomial in the size of $K$. 

The construction resembles the encoding of the bounded model checking approach for LTL~\cite{Clarke:2001:BMC:510986.510987}.
Let $\psi_\iota = \exists \pi'_1 \dots \pi'_{k'}.~\psi'_{\iota}$ and $\psi=\exists \pi''_1 \dots \pi''_{k''}.~\psi''$ and let $\AP_{\psi_\iota}$ and $\AP_{\psi}$ be the sets of atomic propositions that appear in $\psi_\iota$ and $\psi$ respectively.  
The propositional formula $\phi$ is given as a conjunction of the following propositional formulas:
$$\phi = \bigwedge_{i\leq k} \llbracket K \rrbracket ^{\pi_i}_\mu  \wedge \llbracket K \rrbracket ^\sigma_\mu \wedge \llbracket \psi_\iota \rrbracket _{\mu}^0 \wedge \llbracket \psi \rrbracket _{\mu}^0$$
where:
\begin{itemize}
	\item $\mu$ is  length of considered lassos and is equal to  $\mu = 2^{|\psi'_\iota \wedge \psi''|}*|S|^{k+k'+k''+1}+1$, which is one plus the size of the product automaton constructed from the $k+k'+k''+1$ self-composition and the automaton for $\psi_\iota \wedge \psi$. The "plus one" is to additionally to check whether the number of models is infinite.  

	\item $\llbracket K \rrbracket ^\pi_k$ is the encoding of the transition relation of the copy of $K$ where atomic propositions are indexed with $\pi$ and up to an unrolling of length $k$. Each state of $K$ can be encoded as an evaluation of a vector of $\log{|S|}$ unique propositional variables. The encoding\footnote{We refer the reader to \cite{Clarke:2001:BMC:510986.510987} for more information on the encoding} is given by the propositional formula $ I(\vv{v}_0^\pi) \wedge \bigwedge_{i=0}^{k-1}\tau(\vv{v}_i^\pi,\vv{v}_{i+1}^\pi)$ which encodes all paths of $K$ of length $k$. 
	The formula $I(\vv{v}_0^\pi)$ defines the assignment of the initial state. The formulas $\tau(\vv{v}_i^\pi,\vv{v}_{i+1}^\pi)$ define valid transitions in $K$ from the $i$th to the $(i+1)$st state of a path. 
	\item $ \llbracket \psi_\iota \rrbracket _{k}^0$ and $\llbracket \psi \rrbracket _{k}^0$ are constructed using the following rules\footnote{We omitted the rules for boolean operators for the lack of space}:
		\begin{center}
			\small
			\begin{tabular}{|c||c|c|}
			\hline
			& $i<k$ & $i=k$ \\
			\hline
			$\llbracket a_\pi \rrbracket _k^i$ & $a^i_\pi$ & $\bigvee_{j=0}^{k-1} (l_j \wedge a_\pi^j)$ \\
			\hline 
			$\llbracket \neg a_\pi \rrbracket _k^i$ & $\neg a^i_\pi$& $\bigvee_{j=0}^{k-1} (l_j \wedge \neg a_\pi^j)$ \\
			\hline
			$\llbracket \LTLcircle \varphi_1 \rrbracket _k^i $ & $\llbracket \varphi_1 \rrbracket _k^{i+1} $ &  $\bigvee_{j=0}^{k-1} (l_j \wedge \llbracket \varphi_1 \rrbracket_k^j)$\\
			\hline
			$\llbracket \varphi_1 \LTLuntil \varphi_2 \rrbracket _k^i$ & $\llbracket \varphi_2 \rrbracket _k^i \vee (\llbracket \varphi_1 \rrbracket _k^i \wedge \llbracket \varphi_1 \LTLuntil \varphi \rrbracket _k^{i+1})$ & $\bigvee_{j=0}^{k-1} (l_j \wedge \langle \varphi_1 \LTLuntil \varphi_2 \rangle _k^j )$ \\
			\hline
			$\langle \varphi_1 \LTLuntil \varphi_2 \rangle _k^i$ & $\llbracket \varphi_2 \rrbracket _k^i \vee (\llbracket \varphi_1 \rrbracket_k^i \wedge \langle \varphi_1 \LTLuntil \varphi \rangle _k^{i+1})$ & false \\
			\hline
			$\llbracket \varphi_1 \LTLrelease \varphi_2 \rrbracket _k^i$ & $\llbracket \varphi_2 \rrbracket _k^i \wedge (\llbracket \varphi_1 \rrbracket _k^i \vee \llbracket \varphi_1 \LTLrelease \varphi \rrbracket _k^{i+1})$ & $\bigvee_{j=0}^{k-1} (l_j \wedge \langle \varphi_1 \LTLrelease \varphi_2 \rangle _k^j )$ \\
			\hline
			$\langle \varphi_1 \LTLrelease \varphi_2 \rangle _k^i$ & $\llbracket \varphi_2 \rrbracket _k^i \wedge (\llbracket \varphi_1 \rrbracket_k^i \vee \langle \varphi_1 \LTLrelease \varphi \rangle _k^{i+1})$ & true \\
			\hline
		\end{tabular}
		\end{center}
		in case of an existential quantifier over a trace variable $\pi$, we add a copy of the encoding of $K$ with new variables distinguished by $\pi$:
		\begin{center}
		\small
		\begin{tabular}{|c|c|}
			\hline 
			\quad $\llbracket \exists \pi.\varphi_1 \rrbracket _k^i $ \quad & \quad $\llbracket K \rrbracket^\pi_k  \wedge \llbracket \varphi_1 \rrbracket _k^i$ \quad \\
			\hline
		\end{tabular}
		\end{center} 
\end{itemize}

We define sets $X= \{a^i_\sigma \mid a \in A , i \leq k\}$, $Y= \{a^i \mid  a \in \AP_\psi \setminus A, i\leq k\}$	and $Z = P\setminus X \cup Y$, where $P$ is the set of all propositions in $\phi$. The maximum model counting problem is then $\textit{MMC}(\phi, X,Y,Z)$.

\subsection{Experiments}
We have implemented the Max\#Sat-based model checking approach from the last section. We compare the Max\#Sat-based approach to the expansion-based approach using \hyperltl~\cite{conf/cav/FinkbeinerRS15}. Our implementation uses the MaxCount tool~\cite{DBLP:conf/aaai/FremontRS17}. We use the option in MaxCount that enumerates, rather than approximates, the number of assignments for the counting variables. We furthermore instrumented the tool so that it terminates as soon as a sample is found that exceeds the given bound. If no sample is found after one hour, we report a timeout.  

Table~\ref{tab:comparison} shows the results on a parameterized benchmark obtained from the implementation of an 8bit passcode checker. The parameter of the benchmark is the bound on the number of bits that is leaked to an adversary, who might, for example, enter passcodes in a brute-force manner. In all instances, a violation is found.  The results show that the Max\#Sat-based approach scales significantly better than the expansion-based approach.  
\begin{table}
\scriptsize
\centering
\begin{tabular}{|c|c||c|c|c|c||c|c|c|}
\hline
	\multirow{2}{1.8cm}{\textbf{Benchmark}}& \multirow{2}{2cm}{\textbf{Specification}} & \multicolumn{4}{c||}{MCHyper} & \multicolumn{3}{c|}{MCQHyper}\\
	\cline{3-9}
	& &\#Latches & \#Gates & \#Quan. & Time(sec)  & \#max & \#count & Time(sec). \\
	\hline
	Pwd\_8bit & 1bit\_leak & \multirow{5}{1cm}{\center 9} & \multirow{5}{1cm}{\center 55} &  2 & 0.3 & 16 & 2 & 1\\
	\cline{1-2} \cline{5-9}
	& 2bit\_leak & & & 4 & 0.4 & 32 & 4 & 1\\
	\cline{1-2} \cline{5-9}
	& 3bit\_leak & & & 8 & 1.3 & 64 & 8 & 2\\
	\cline{1-2} \cline{5-9}
	& 4bit\_leak & & & 16 & 97  & 128 & 16 & 4 \\
	\cline{1-2} \cline{5-9}
	& 5bit\_leak & & &  32 & TO  & 256 & 32 & 8\\
	\cline{1-2} \cline{5-9}
	& 6bit\_leak & & & 64 &TO  & 512 & 64 & 335\\
	\cline{1-2} \cline{5-9}
	& 8bit\_leak & & &  256 &TO & 2048 & 256 & TO\\
	\hline
\end{tabular}
\caption{\scriptsize Comparison between the expansion-based approach (MCHyper) and the Max\#Sat-based approach (MCQHyper). \#max is the number of maximization variables (set $Y$). \#count is the number of the counting variables (set $X$). TO indicates a time-out after 1 hour.}
\label{tab:comparison}
\vskip -1.0cm
\end{table}

\section{Conclusion}
We have studied quantitative hyperproperties of the form $\forall \pi_1,\dots,\pi_k.\ \varphi \rightarrow (\# \sigma: A.\ \psi  \triangleleft n)$, where $\varphi$ and $\psi$ are HyperLTL formulas, and $\#\sigma:A. \varphi \triangleleft n$ compares the number of traces that differ in the atomic propositions $A$ and satisfy $\psi$ to a threshold $n$. Many quantitative information flow policies of practical interest, such as quantitative non-interference and deniability, belong to this class of properties.
Our new counting-based model checking algorithm for quantitative hyperproperties performs at least exponentially better in both time and space in the bound $n$ than a reduction to standard HyperLTL model checking. 
The new counting operator makes the specifications exponentially more concise in the bound, and our model checking algorithm solves the concise specifications efficiently.

We also showed that the model checking problem for quantitative hyperproperties can be solved with a practical Max\#SAT-based algorithm. The SAT-based approach outperforms the expansion-based approach significantly for this class of properties.
An additional advantage of the new approach is that it can handle properties like deniability, which cannot be checked by MCHyper because of the quantifier alternation.

\bibliographystyle{plain}
\bibliography{main}


\end{document}